\newtheorem{theorem}{Theorem}
\newtheorem{lemma}{Lemma}
\newtheorem{corollary}{Corollary}
\title{Polyline Drawings with Topological Constraints}
\author{
	Emilio Di Giacomo\thanks{Universit\`a degli Studi di Perugia, Perugia, Italy,\newline 
		\indent \texttt{\{emilio.digiacomo, giuseppe.liotta, fabrizio.montecchiani\}@unipg.it}} 
	\and Peter Eades\thanks{University of Sydney, Sydney, Australia, \texttt{peter.d.eades@gmail.com}} 
	\and Giuseppe Liotta\footnotemark[1] 
	\and Henk Meijer\thanks{University College Roosevelt, Middelburg, The Netherlands, \texttt{h.meijer@ucr.nl}} 
	\and Fabrizio Montecchiani\footnotemark[1]
}
\date{}
\begin{document}

\maketitle

\begin{abstract}
Let $G$ be a simple topological graph and let $\Gamma$ be a polyline drawing of $G$.  We say that  $\Gamma$ \emph{partially preserves the topology} of $G$ if it has the same external boundary, the same rotation system, and the same set of crossings as $G$. Drawing $\Gamma$ \emph{fully preserves the topology} of $G$  if the planarization of $G$ and the planarization of $\Gamma$ have the same planar embedding. We show that if the set of crossing-free edges of $G$ forms a connected spanning subgraph, then $G$ admits a polyline drawing that partially preserves its topology and that has curve complexity at most three (i.e., at most three bends per edge). If, however, the set of crossing-free edges of $G$ is not a connected spanning subgraph, the curve complexity may be $\Omega(\sqrt{n})$. Concerning drawings that fully preserve the topology, we show that if $G$ has skewness $k$, it admits one such drawing with curve complexity at most $2k$; for skewness-1 graphs, the curve complexity can be reduced to one, which is a tight bound. We also consider optimal $2$-plane graphs and discuss trade-offs between curve complexity and crossing angle resolution of drawings that fully preserve the topology.
\end{abstract}

\section{Introduction}
 A fundamental result in graph drawing is the so-called ``stretchability theorem''~\cite{fary48,Stein51,Wagner36}: Every planar simple topological graph admits a straight-line drawing that preserves its topology. One may ask whether a similar theorem holds for non-planar simple topological graphs. Motivated by the fact that a straight-line drawing  may  not be possible even for a planar graph plus an edge~\cite{eades2015}, we allow bends along the edges and measure the quality of the computed drawings in terms of their \emph{curve complexity}, defined as the maximum number of bends per edge.

 Let $G$ be a simple topological graph and let $\Gamma$ be a polyline drawing of $G$. (Note that, by definition of simple topological graph, $G$ has neither multiple edges nor self-loops; see also Section~\ref{se:preli} for formal definitions.)  Drawing $\Gamma$  \emph{fully preserves the topology} of $G$ if the planarization of $G$ (i.e., the planar simple topological graph obtained from $G$ by replacing crossings with dummy vertices) and the planarization of $\Gamma$ have the same planar embedding. Eppstein et al.~\cite{DBLP:journals/corr/EppsteinGSU16} prove the existence of a simple arrangement of $n$ pseudolines that, when drawn with polylines, it requires at least one pseudoline to have $\Omega(n)$ bends. It is not hard to see that the result by Eppstein et al. implies the existence of an $n$-vertex simple topological graph such that any polyline drawing that fully preserves its topology has curve complexity $\Omega(n)$ (see Corollary~\ref{co:eppstein} in Section~\ref{se:preli}). This lower bound naturally suggests two research directions: (i) ``Trade'' curve complexity for accuracy in the preservation of the topology and (ii) Describe  families of simple topological graphs for which polyline drawings that fully preserve their topologies and that have low curve complexity can be computed.
 
 \begin{figure}[t]
	\centering
	\begin{minipage}[b]{.18\textwidth}
		\centering 
		\includegraphics[page=1, width=\textwidth]{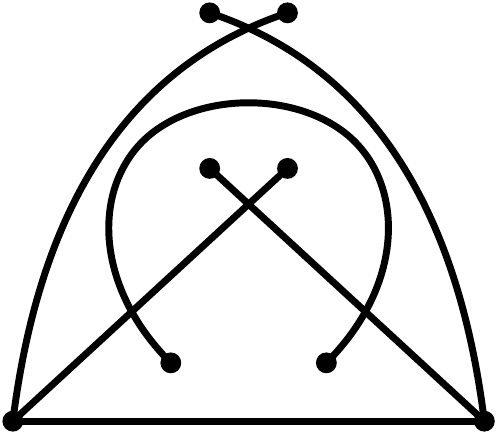}
		\subcaption{~}\label{fi:intro-a}
	\end{minipage}
	\hfil
	\begin{minipage}[b]{.18\textwidth}
		\centering 
		\includegraphics[page=2, width=\textwidth]{figure/intro-example}
		\subcaption{~}\label{fi:intro-b}
	\end{minipage}
	\caption{(a) A topological graph $G$ that requires at least $1$ bend in any polyline drawing that fully preserves its topology. (b) A straight-line drawing that partially preserves the topology~of~$G$.}
\end{figure}

 Concerning the first research direction, we consider the following relaxation of topology preserving drawing. A polyline drawing of a simple topological graph $G$  \emph{partially preserves the topology} of $G$ if it has the same rotation system, the same external boundary, and the same set of crossings as $G$, while it may not preserve the order of the crossings along an edge. It may be worth recalling that some (weaker) notions of topological equivalence between  graphs have been already considered in the literature. For example, Kyn\v{c}l ~\cite{Kyncl2011,KYNCL20091676} and Aichholzer et al.~\cite{aafhpprsv-agdsc-15,ahpsv-dmgdc-15} study \emph{weakly isomorphic} simple topological graphs: Two simple topological graphs are \emph{weakly isomorphic} if they have the same set of vertices, the same set of edges, and the same set of edge crossings. Note that a drawing $\Gamma$ that partially preserves the topology of a simple toplogical graph $G$ is weakly isomorphic to $G$ and, in addition, it has the same rotation system and the same external boundary as $G$. Also, Kratochv\'{\i}l, Lubiw, and Ne\v{s}et\v{r}il~\cite{kln-nstl-91} define the notion of  \emph{abstract topological graph} as a pair $(G,\chi)$, where $G$ is a graph and $\chi$ is a set of pairs of crossing edges; a \emph{strong realization} of $G$ is a drawing $\Gamma$ of $G$ such that two edges of $\Gamma$ cross if and only if they belong to $\chi$. The problem of computing a drawing that partially preserves a topology may be rephrased as the problem of computing a strong realization of an abstract topological graph for which a rotation system and an external boundary are given in input. A different relaxation of the topology preservation is studied by Durocher and Mondal, who proved bounds on the curve complexity of drawings that preserve the thickness of the input graph~\cite{durocher2016}. 
 
 Concerning the second research direction, we investigate the curve complexity of polyline drawings that fully preserve the topology of meaningful families of \emph{beyond-planar} graphs, that are families of non-planar graphs for which some crossing configurations are forbidden (see, e.g.,~\cite{DBLP:journals/jgaa/BekosKM18,DBLP:journals/corr/abs-1804-07257} for surveys and special issues on beyond-planar graph drawing). In particular, we focus on graphs with skewness $k$, i.e., non-planar graphs that can be made planar by removing at most $k$ edges, and on $2$-plane graphs, i.e., non-planar graphs for which any edge is crossed at most twice. Note that a  characterization of those graphs with skewness one having a straight-line drawing that fully preserves the topology is presented in~\cite{eades2015}. Also, all $1$-plane graphs (every edge can be crossed at most once) admit a polyline drawing with curve complexity one that fully preserves the topology and such that any crossing angle is $\frac{\pi}{2}$~\cite{DBLP:conf/ewcg/Chaplick18}.

 Our results can be listed as follows. Let $G$ be a simple topological graph.

\begin{itemize}
\item If the subgraph of $G$ formed by the uncrossed edges and all vertices of $G$, called {\em planar skeleton}, is connected, then $G$ admits a polyline drawing with curve complexity three that partially preserves its topology. If the planar skeleton is biconnected the curve complexity can be reduced to one, which is worst-case optimal (Section~\ref{se:partially}).

\item For the case that the planar skeleton of $G$ is not connected, we prove that the  curve complexity may be $\Omega(\sqrt{n})$ (Section~\ref{se:partially}).

\item If $G$ has skewness $k$, then $G$ admits a polyline drawing with curve complexity $2k$ that fully preserves its topology. When $k=1$, the curve complexity can be reduced to one, which is worst-case optimal (Section~\ref{se:fully}).

\item If $G$ is optimal 2-plane (i.e., it is $2$-plane and it has $5n - 10$ edges), then $G$ admits  a drawing that  fully preserves its topology and with two bends in total, and a drawing  that  fully preserves its topology, with at most two bends per edge, and with optimal crossing angle resolution. The number of bends per edge can be reduced to one while maintaining the crossing angles arbitrarily close to $\frac{\pi}{2}$ (Section~\ref{se:fully}).
\end{itemize}

We conclude the introduction with an example about the difference between a drawing that fully preserves  and one that partially preserves a given topology. Figure~\ref{fi:intro-a} shows a simple topological graph for which every polyline drawing fully preserving its topology has at least one bend on some edge.  Figure~\ref{fi:intro-b} shows a  drawing of the same graph that partially preserve its topology and has no bends.


\section{Preliminaries}\label{se:preli}

A \emph{simple topological graph} is a drawing of a  graph in the plane such that: (i) vertices are distinct points, (ii) edges are Jordan arcs that connect their endvertices and  do not pass through other vertices, (iii) any two edges intersect at most once by either making a proper crossing or by sharing a common endvertex, and (iv) no three edges pass through the same crossing. A simple topological graph has neither multiple edges (otherwise there would be two edges intersecting twice), nor self-loops (because the endpoints of a Jordan arc do not coincide). A simple topological graph is \emph{planar} if no two of its edges cross. A planar simple topological graph $G$ partitions the plane into topological connected regions, called \emph{faces} of $G$. The unbounded face is called the \emph{external face}.
The \emph{planar embedding} of a simple planar topological graph $G$ fixes the \emph{rotation system} of $G$, defined as the clockwise circular order of the edges around each vertex, and the external face of $G$. The \emph{planar skeleton} of a simple topological graph $G$ is the subgraph of $G$ that contains all vertices  and only the uncrossed edges of $G$. A simple topological graph obtained from $G$ by adding uncrossed edges (possibly none) is called a \emph{planar augmentation} of $G$.

Let $\mathcal{L}$ be an arrangement of $n$ pseudolines; a \emph{polyline realization} $\Gamma_\mathcal{L}$ of $\mathcal{L}$ represents each pseudoline as a polygonal chain while preserving the topology of $\mathcal{L}$. The \emph{curve complexity of $\Gamma_\mathcal{L}$} is the maximum number of bends per  pseudoline in $\Gamma_\mathcal{L}$. The \emph{curve complexity of $\mathcal{L}$} is the minimum curve complexity over all polyline realizations of  $\mathcal{L}$. The \emph{graph associated with $\mathcal{L}$} is a simple topological graph $G_\mathcal{L}$ defined as follows. Let $C$ be a circle of sufficiently large radius such that all crossings of $\mathcal{L}$ are inside $C$ and every pseudoline intersects the boundary of $C$ exactly twice. Replace each crossing between $C$ and a pseudoline with a vertex, remove the portions of each pseudoline that are outside $C$, add an apex vertex $v$ outside $C$, and connect $v$ to the vertices of $C$ with crossing-free edges. See Fig.~\ref{fi:arrangement} for an example.

\begin{figure}[t]
	\centering
	\begin{minipage}[b]{.28\textwidth}
		\centering 
		\includegraphics[page=1, width=\textwidth]{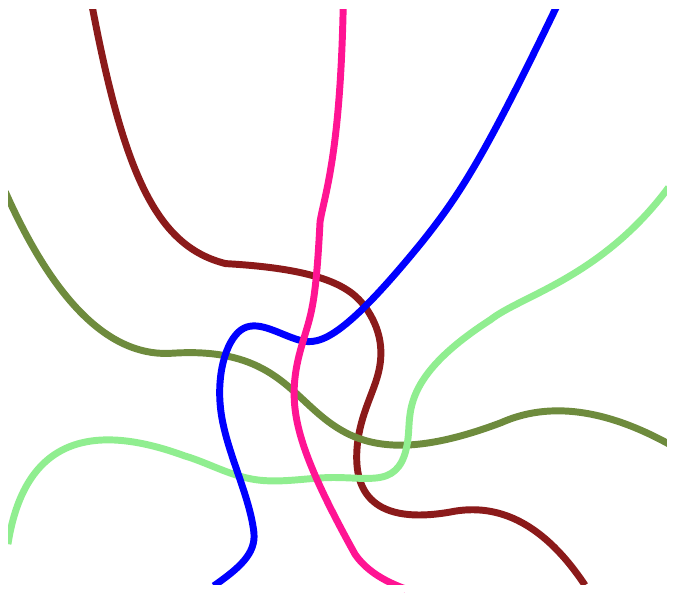}
		\subcaption{~}\label{fi:arrangement-a}
	\end{minipage}
	\hfil
	\begin{minipage}[b]{.28\textwidth}
		\centering 
		\includegraphics[page=2, width=\textwidth]{figure/arrangement}
		\subcaption{~}\label{fi:arrangement-b}
	\end{minipage}
	\caption{(a) An arrangement of pseudolines $\mathcal{L}$. (b) The graph $G_{\mathcal{L}}$ associated with $\mathcal{L}$. }
\label{fi:arrangement}
\end{figure}

\begin{lemma}\label{le:eppstein}
Let $\mathcal{L}$ be an arrangement of $n$ pseudolines and let $G_\mathcal{L}$ be the simple topological graph associated with $\mathcal{L}$. Every polyline drawing of $G_\mathcal{L}$ that fully preserves its topology has curve complexity $\Omega(f(n))$ if and only if $\mathcal{L}$ has curve complexity $\Omega(f(n))$.
\end{lemma}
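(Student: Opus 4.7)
The plan is to show that polyline realizations of $\mathcal{L}$ and polyline drawings of $G_\mathcal{L}$ that fully preserve its topology can be converted into one another while changing the curve complexity by at most an additive $O(1)$. Since this additive constant is absorbed in any $\Omega(f(n))$ asymptotic bound, both directions of the equivalence follow.

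For the direction from a polyline realization $\Gamma_\mathcal{L}$ of $\mathcal{L}$ of curve complexity $c$ to a polyline drawing of $G_\mathcal{L}$, I mimic the definition of $G_\mathcal{L}$: choose a circle $C$ so large that all crossings of $\mathcal{L}$ and all bends of $\Gamma_\mathcal{L}$ lie strictly inside $C$ and each pseudoline meets $C$ exactly twice. Truncating every pseudoline at $C$ yields the $2n$ circle-vertices $u_1, \ldots, u_{2n}$ and the $n$ pseudoline-edges, each of curve complexity at most $c$. Placing the apex $v$ outside $C$ and drawing the $2n$ apex edges in the prescribed cyclic order can be done with $O(1)$ bends per apex edge, giving a polyline drawing of $G_\mathcal{L}$ of curve complexity $c+O(1)$ that fully preserves its topology.

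For the reverse direction, let $\Gamma$ be a polyline drawing of $G_\mathcal{L}$ of curve complexity $c$ that fully preserves its topology. I remove the apex $v$ and all $2n$ apex edges, retaining only the pseudoline-edges as a set of polylines with the correct crossing structure. Each circle-vertex $u_a$ has degree one in this subdrawing and therefore lies on the boundary of its unbounded face; moreover, the $u_a$ appear along this boundary in a cyclic order that, by full topology preservation, matches the cyclic order of the pseudoline ends in $\mathcal{L}$. I then extend each pseudoline-edge at each endpoint by a short polyline that leaves the pseudoline-edge region and escapes to infinity in an appropriate radial direction, using $O(1)$ bends per extension. Concatenating the $n$ pseudoline-edges with their extensions produces a polyline realization of $\mathcal{L}$ of curve complexity $c+O(1)$.

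The main obstacle is to verify that the $2n$ extensions in the reverse direction can be routed pairwise non-crossingly using only $O(1)$ bends each. The key observation is that the unbounded face of the pseudoline-edge subdrawing is an obstacle-free region; since the $u_a$ appear on its boundary in the same cyclic order as the pseudoline ends at infinity in $\mathcal{L}$, routing each extension from $u_a$ outward to a point on a large enclosing circle (with the matching cyclic order) and then as a straight ray to infinity achieves the required non-crossing configuration with a bounded number of bends per extension.
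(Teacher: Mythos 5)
Your forward direction (realization of $\mathcal{L}$ $\Rightarrow$ drawing of $G_\mathcal{L}$) is essentially the paper's argument and is fine. The gap is in the reverse direction, at exactly the step you flag as the main obstacle. After you delete the apex and its edges, the unbounded face of the remaining subdrawing is a polygonal region whose boundary has complexity $\Theta(nb)$, where $b$ is the curve complexity of the given drawing $\Gamma$, and ``fully preserves the topology'' is a purely combinatorial condition on the planarization (rotation system and external face); it does not prevent this region from having deep spiral-shaped pockets. For instance, one pseudoline-edge may wind $\Theta(b)$ times, while another pseudoline-edge (together with its crossing-free apex edge) threads along the channel so that its endpoint $u_a$ lies at the bottom of the pocket; such a drawing still fully preserves the topology of $G_\mathcal{L}$, since it is the image of a ``nice'' drawing under a homeomorphism of the plane. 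Then any extension of that edge from $u_a$ to infinity that avoids all pseudoline-edges must have $\Omega(b)$ bends, so your claim of $O(1)$ bends per extension is false, and the cyclic-order observation about the boundary of the unbounded face does not help with this quantitative issue. Worse, pockets formed collectively by many edges could a priori force even more bends per escape, so your route does not obviously yield any bound of the form $O(b)+O(1)$ at all.

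The paper avoids this precisely by \emph{not} discarding the apex edges: each pseudoline $\ell$ is realized by the closed curve $\gamma_\ell$ formed by its pseudoline-edge $(u_1,u_2)$ together with the two crossing-free apex edges $(v,u_1)$ and $(v,u_2)$, which has at most $3b$ bends, and all these curves pass through the single point $p_v$. Cutting every $\gamma_\ell$ in a small neighborhood of $p_v$ and splicing the $2n$ ends onto a bundle of nearly parallel rays along a half-line from $p_v$ that misses the drawing (or, if no such half-line exists, first walking along the external-face boundary, which costs only $O(b)$ additional bends shared by the bundle) gives a realization with $3b+2$, respectively $5b+\Theta(1)$, bends per pseudoline; the correct cyclic order at infinity is inherited from the rotation at $v$. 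Note that the overhead is multiplicative rather than the additive $O(1)$ you aim for, which is all the asymptotic statement requires. To repair your argument you would essentially have to route each extension alongside its apex edge to $p_v$ and then escape jointly, i.e., reproduce the paper's construction.
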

\begin{proof}
	Assume that every polyline drawing of $G_\mathcal{L}$ that fully preserves its topology has curve complexity $\Omega(f(n))$ and suppose, as a contradiction, that $\mathcal{L}$ has a polyline representation $\Gamma_{\mathcal{L}}$ with $o(f(n))$ bends. We draw a circle $C$ on $\Gamma_{\mathcal{L}}$ so that all crossings and bends are inside $C$. We place a vertex at each crossing between $C$ and a pseudoline of $\mathcal{L}$ and remove the portions of each pseudolines that are outside $C$. We obtain a drawing of $G_\mathcal{L}$ except for the apex vertex $v$ and its incident edges. We place $v$ outside $C$ sufficiently far so that it is possible to connect it to all the other vertices by drawing each edge with at most $1$ bend. The resulting drawing is a drawing of $G_\mathcal{L}$ with curve complexity $o(f(n))$, a contradiction.
	
	\begin{figure}[h]
		\centering
		\begin{minipage}[b]{.28\textwidth}
			\centering 
			\includegraphics[page=1, width=\textwidth]{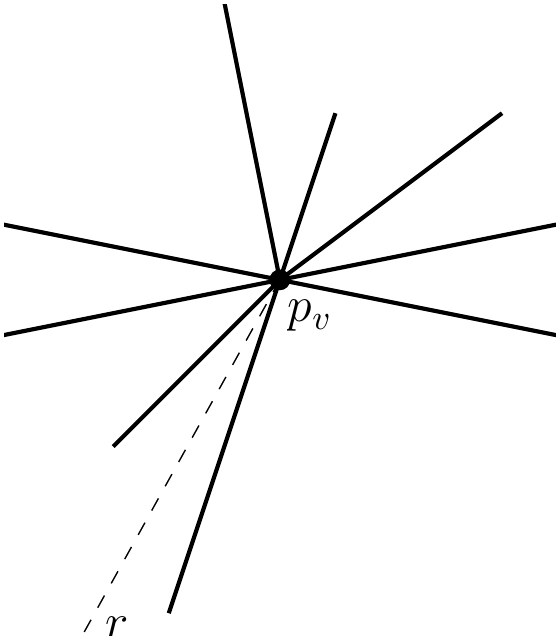}
			\subcaption{~}\label{fi:center-a}
		\end{minipage}
		\hfil
		\begin{minipage}[b]{.28\textwidth}
			\centering 
			\includegraphics[page=2, width=\textwidth]{figure/center}
			\subcaption{~}\label{fi:center-b}
		\end{minipage}
		\caption{Illustration for the proof of Lemma~\ref{le:eppstein}. (a) Point $p_v$ with all incident edges. (b) Separation of the different edges and extension to infinity.\label{fi:center}}
	\end{figure}
	
	Assume now that every polyline representation of $\mathcal{L}$ has curve complexity $\Omega(f(n))$ and suppose that $G_\mathcal{L}$  admits a polyline drawing $\Gamma$ whose curve complexity is $b \in o(f(n))$. For each pseudoline $\ell \in \mathcal{L}$ we have two vertices $u_1$ and $u_2$ on $C$ and therefore three edges in $G_\mathcal{L}$: $(u_1,u_2)$, $(v,u_1)$, and $(v,u_2)$, where $v$ is the apex vertex of $G_\mathcal{L}$. Each of these three edges has at most $b$ bends and therefore their union is a closed curve $\gamma_{\ell}$ with at most $3b$ bends. Let $p_v$ be the point representing $v$ in $\Gamma$. Suppose first that there exists a half-line $r$ with origin $p_v$ that does not intersect $\Gamma$ except at $p_v$. It is possible to choose a set of $2n$ lines parallel to $r$ and sufficiently close to it so that each curve $\gamma_{\ell}$ can be cut in a neighborhood of $p_v$ and extended to infinity by using two of the parallel lines (see Fig.~\ref{fi:center}). The resulting drawing is a polyline realization of $\mathcal{L}$ with curve complexity $3b+2 \in o(f(n))$, a contradiction. If the half-line $r$ does not exist, starting from $p_v$ and following the boundary of the external face, we can draw a polyline with at most $3b$ bends that reaches a point $p'_v$ for which the half-line $r$ exists and use $p'_v$ to extend to infinity the polylines representing the pseudolines. The final drawing has curve complexity $5b+\Theta(1) \in o(f(n))$, again a contradiction.
\end{proof}

Lemma~\ref{le:eppstein} and the result of Eppstein et al.~\cite{DBLP:journals/corr/EppsteinGSU16} proving the existence of an arrangement of $n$ pseudolines with curve complexity $\Omega(n)$ imply the following.

\begin{corollary}\label{co:eppstein}
There exists a simple topological graph with $n$ vertices such that any drawing that fully preserves its topology has curve complexity $\Omega(n)$.
\end{corollary}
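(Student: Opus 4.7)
The plan is to combine the two ingredients already at hand: the Eppstein--Gupta--Sidiropoulos--Uno lower bound that exhibits, for every $n$, a pseudoline arrangement $\mathcal{L}_n$ of $n$ pseudolines whose curve complexity is $\Omega(n)$, and Lemma~\ref{le:eppstein}, which translates lower bounds on the curve complexity of $\mathcal{L}$ into lower bounds on the curve complexity of any topology-preserving polyline drawing of the associated graph $G_{\mathcal{L}}$.

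Concretely, I would start from the arrangement $\mathcal{L}_n$ above and form the associated simple topological graph $G_{\mathcal{L}_n}$ following the construction recalled in Section~\ref{se:preli}: enclose all crossings of $\mathcal{L}_n$ in a sufficiently large circle $C$, replace each intersection of $C$ with a pseudoline by a vertex, discard the pseudoline portions outside $C$, and add an apex $v$ outside $C$ joined to every vertex on $C$ by crossing-free edges. The resulting graph has $N = 2n + 1$ vertices, so $n = \Theta(N)$.

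Then I would apply Lemma~\ref{le:eppstein} with $f(n) = n$ in the direction that converts a lower bound on the curve complexity of $\mathcal{L}_n$ into a lower bound on the curve complexity of any polyline drawing of $G_{\mathcal{L}_n}$ that fully preserves its topology. Since $\mathcal{L}_n$ has curve complexity $\Omega(n)$ by Eppstein et al., every such drawing of $G_{\mathcal{L}_n}$ must have curve complexity $\Omega(n) = \Omega(N)$, proving the claim on $N$ vertices.

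I do not foresee a genuine obstacle here: the work is entirely front-loaded in Lemma~\ref{le:eppstein}, and the only point worth a line of justification is that passing from $n$ pseudolines to the $2n+1$ vertices of $G_{\mathcal{L}_n}$ does not weaken the bound, since $\Omega(n) = \Omega(N)$. Consequently the statement follows as an immediate corollary.
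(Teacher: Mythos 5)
Your proposal is correct and follows exactly the paper's route: the paper derives the corollary by combining Lemma~\ref{le:eppstein} with the Eppstein et al.\ arrangement of curve complexity $\Omega(n)$, applied to the associated graph $G_{\mathcal{L}}$ with $2n+1$ vertices. Your extra remark that $\Omega(n)=\Omega(N)$ for $N=2n+1$ vertices is the only detail the paper leaves implicit, and it is handled correctly.
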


In the next section we study a relaxation of the concept of topology preservation by which we derive constant upper bounds on the curve complexity.

\section{Polyline Drawings that Partially Preserve the Topology}\label{se:partially}

A polygon $P$ is \emph{star-shaped} if there exists a set of points, called the \emph{kernel} of $P$, such that for every point $z$ in this set and for each point $p$ of on the boundary of $P$, the segment $\overline{zp}$ lies entirely within $P$. A simple topological graph is \emph{outer} if all its vertices are on the external boundary and all the edges of the external boundary are uncrossed. Let $G$ be an outer simple topological graph with $n \ge 3$ vertices and let $P$ be a star-shaped $n$-gon. A drawing $\Gamma$ of $G$ that \emph{extends} $P$ is such that the $n$ vertices of $G$ are placed at the corners of $P$, and every edge of $G$ is drawn either as a side of $P$ or inside $P$.

\begin{lemma}\label{le:extend}
	Let $G$ be an outer simple topological graph with $n \ge 3$ vertices and let $P$ be a star-shaped $n$-gon. There exists a polyline drawing of $G$ with curve complexity at most one that partially preserves the topology of $G$ and that extends $P$.
\end{lemma}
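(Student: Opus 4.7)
The plan is to exploit the star-shape of $P$ by routing every inner edge of $G$ through a common kernel point. I place each vertex of $G$ at the corresponding corner of $P$, following the cyclic order induced by the external boundary of $G$, so that every boundary edge of $G$ coincides with a side of $P$. Fix a point $z$ strictly inside the kernel of $P$. Each inner edge $e=(u,v)$ of $G$ will be drawn as a two-segment polyline $u\to b_e\to v$, with the bend $b_e$ placed in a tiny neighborhood of $z$ inside the kernel.

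To specify the bends I would, at every vertex $u$, order the inner edges incident to $u$ as prescribed by the rotation system of $G$ and assign to each such edge $e$ a tangent direction $\tau_u^e$ lying in the interior angular sector of $P$ at $u$ and very close to $\overrightarrow{uz}$, so that the assigned tangents at $u$ appear in the prescribed cyclic order. For an inner edge $e=(u,v)$ the bend $b_e$ is then the intersection of the ray from $u$ with tangent $\tau_u^e$ and the ray from $v$ with tangent $\tau_v^e$; when the tangents are taken sufficiently close to $\overrightarrow{uz}$ and $\overrightarrow{vz}$, this intersection is arbitrarily close to $z$, lies inside the kernel, and both segments $\overline{u b_e}$ and $\overline{b_e v}$ remain inside $P$. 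Rotation preservation at every vertex is then immediate by construction.

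The crucial point is that the new drawing realizes exactly the crossing set of $G$. A key observation is that in an outer simple topological graph, the Jordan curve theorem together with axiom (iii) of simple drawings implies that two inner edges sharing no endpoint cross if and only if their endpoints alternate along the external boundary, while two inner edges sharing an endpoint never cross. Since the cyclic order of the vertices on $P$ equals their cyclic order along the external boundary of $G$, this alternation pattern is inherited by the new drawing. In the limiting \emph{star configuration} where every bend is placed exactly at $z$, each inner edge is a $V$-shape based at $z$, and inside any small disk $D$ around $z$ two such $V$-shapes interleave precisely when their endpoints alternate on $\partial P$. For bends chosen in general position in a sufficiently small neighborhood of $z$: outside $D$ the polylines closely approximate pairwise non-crossing radial segments out of $z$; inside $D$ each pair with alternating endpoints crosses exactly once, each pair with non-alternating disjoint endpoints does not cross, and each pair sharing a vertex remains crossing-free except at the shared vertex.

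The main obstacle is precisely this global crossing-matching step. Rotation is controlled locally by a single tangent at each vertex and is straightforward; crossings depend on the joint placement of all bends. The argument reduces to a finite continuity check: for each of the finitely many pairs of inner edges, the prescribed crossing count (zero or one) is stable under sufficiently small perturbations of the bends, so a single small-enough neighborhood of $z$ works simultaneously for all pairs, yielding a drawing that extends $P$, has curve complexity at most one, and partially preserves the topology of $G$.
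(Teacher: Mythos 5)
Your construction starts the same way as the paper's (vertices at the corners of $P$, boundary edges as sides of $P$, one bend per chord placed near a point of the kernel), but the step you yourself single out as the crux --- that for \emph{any} generic placement of the bends in a sufficiently small neighbourhood of $z$ the drawing realizes exactly the prescribed crossing set --- is a genuine gap, and the ``finite continuity check'' you invoke cannot close it. Stability of crossing counts under small perturbations is available around a configuration in which the prescribed crossings are already realized transversally and no degeneracies occur; your reference configuration is the opposite: in the star configuration all polylines pass through the single point $z$, so it is maximally degenerate, and the crossing pattern of nearby configurations is \emph{not} determined by it --- different perturbations of the bends realize different crossing sets. The alternation argument on $\partial D$ only controls parity: it says an alternating pair crosses an odd number of times inside $D$ and a non-alternating pair an even number of times. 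It does not exclude (i) a non-alternating pair crossing twice: if the bend of $(x,y)$ lands inside the narrow wedge formed near $z$ by the two segments of $(u,v)$, both segments of $(x,y)$ must exit that wedge and each crosses $(u,v)$ once; (ii) two edges sharing an endpoint crossing near $z$: their limit polylines meet at $z$, and depending on how the two bends are perturbed this intersection either vanishes or becomes a real crossing; (iii) an alternating pair crossing three times, which would violate the requirement that the drawing has the same set of crossings (each crossing pair crossing once). All of these failures occur with bends in general position and arbitrarily close to $z$, so ``sufficiently small and generic'' is not a sufficient hypothesis; a concrete placement rule plus a per-pair argument is needed. (A minor further point: if $z$ lies on the line through $u$ and $v$, the two rays you intersect are antiparallel and need not meet near $z$; this is easily fixed by choosing $z$ off all lines through pairs of corners, but it is another sign that the placement cannot be left implicit.)

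This missing work is exactly what the paper's proof supplies. It inserts the chords one at a time by nondecreasing span, places each bend on the angular bisector from a kernel point $c$, and maintains the invariant that every not-yet-inserted edge still has a ``free segment'' on its bisector where its bend can be put so that it crosses every already-drawn edge at most once; the case analysis (both remaining endpoints in the sub-polygon $P_2$ containing $c$, or one endpoint on each side of the newly drawn edge, using that later edges have span at least that of the current one) is precisely what guarantees ``exactly one crossing for alternating pairs, none otherwise''. If you want to keep your one-shot ``all bends near $z$'' picture, you would have to replace the appeal to general position by an explicit rule for the radii and directions of the bends together with a crossing analysis for each type of pair --- which in effect reproduces the paper's incremental invariant.
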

\begin{proof}
	We explain how to compute a drawing with the desired properties for the complete graph $K_n$. Clearly a drawing of $G$ can be obtained by removing the missing edges. Identify each vertex of $K_n$ with a distinct corner of $P$, and let $\{v_0,v_1,\dots,v_{n-1}\}$ be the $n$ vertices of $K_n$ in the clockwise circular order they appear along the boundary of $P$. Note that every edge $(v_i,v_{i+1})$, for $i=0,1,\dots,n-1$ (indices taken modulo $n$), coincides with a side of $P$ and hence it is drawn as a straight-line segment. We now show how to draw all the edges between vertices at \emph{distance} greater than one. The distance between two vertices $v_i$ and $v_j$ is the number of vertices encountered along $P$ when walking clockwise from $v_i$ (excluded) to $v_j$ (included).	We orient each edge $(v_i,v_j)$ from $v_i$ to $v_j$ if the distance between $v_i$ and $v_j$ is smaller than or equal to the distance between $v_j$ and $v_i$. The \emph{span} of an oriented edge $(v_i,v_j)$ is equal to the distance between $v_i$ and $v_j$. We add all oriented edges $(v_i,v_j)$ by increasing value of the span. Let $c$ be an interior point of the kernel, for example its centroid. For any pair of vertices $v_i$ and $v_{j}$, let $b_{i,j}$ be the bisector of the angle swept by $r_i=\overline{cv_i}$ when rotated clockwise around $c$ until it overlaps with $r_{j}=\overline{cv_{j}}$. We denote by $\Gamma_k$ the drawing after the addition of the first $k \ge 0$ edges and maintain the following invariant for $\Gamma_k$.

	\begin{itemize}
		
		\item For each oriented edge $(v_i,v_{j})$ not yet in $\Gamma_k$, there is a point $p_{i,j}$ on $b_{i,j}$ such that $(v_i,v_{j})$ can be drawn with a bend at any point of the segment $\sigma_{i,j}=\overline{cp_{i,j}}$ intersecting any edge of $\Gamma_k$ at most once (either at a crossing or at a common endpoint).
		
	\end{itemize}
	
	We will refer to the segment $\sigma_{i,j}$ described in the invariant as the \emph{free segment} of $(v_i,v_{j})$. Since $P$ is star-shaped, the invariant holds for $\Gamma_0$; in particular the free segment of every $(v_i,v_{j})$ is the intersection of $b_{i,j}$ with the kernel.
	
	\begin{figure}[t]
		\centering
		\begin{minipage}[b]{.3\textwidth}
			\centering 
			\includegraphics[page=1, width=\textwidth]{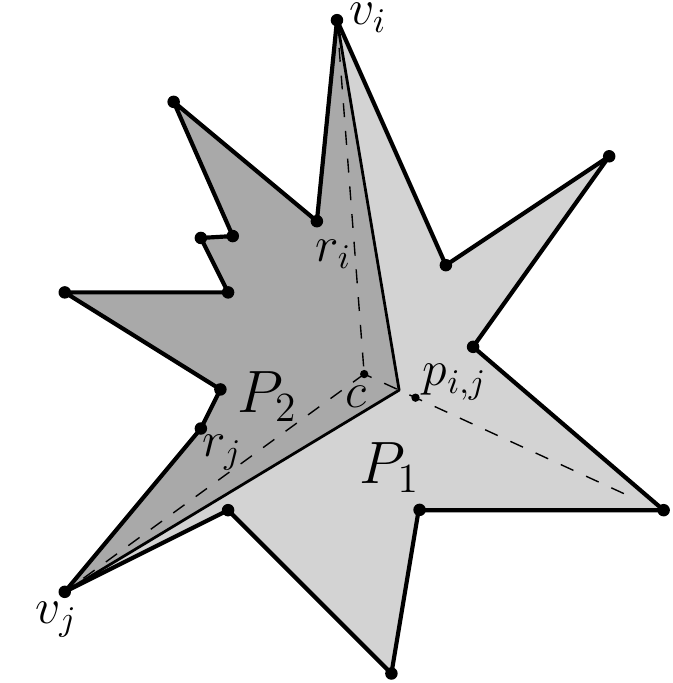}
			\subcaption{~}\label{fi:star-shaped-a}
		\end{minipage}
		\hfil
		\begin{minipage}[b]{.3\textwidth}
			\centering 
			\includegraphics[page=2, width=\textwidth]{figure/star-shaped}
			\subcaption{~}\label{fi:star-shaped-b}
		\end{minipage}
		\hfil
		\begin{minipage}[b]{.3\textwidth}
			\centering 
			\includegraphics[page=3, width=\textwidth]{figure/star-shaped}
			\subcaption{~}\label{fi:star-shaped-c}
		\end{minipage}
		\caption{Illustration for the proof of Lemma~\ref{le:extend}. (a) The two polygons defined by the addition of edge $(v_i,v_j)$. (b) Case 1: $(v_h,v_l)$ is contained in $P_2$. (c) Case 2: $(v_h,v_l)$ intersects $(v_i,v_j)$.\label{fi:star-shaped}}
	\end{figure}
	
	Let $(v_i,v_{j})$ be the $k$-th edge to be added and assume that the invariant holds for $\Gamma_{k-1}$. We place the bend point of $(v_i,v_{j})$ at any point of the segment $\sigma_{i,j}$. By the invariant, the resulting edge intersects any other existing edge at most once. We now prove that the invariant is maintained. The drawing of the edge $(v_i,v_{j})$ divides the polygon $P$ in two sub-polygons (see Fig.~\ref{fi:star-shaped-a}). We denote by $P_1$ the one that contains the portion of the boundary of $P$ that is traversed when going clockwise from $v_i$ to $v_{j}$, and by $P_2$ the other one. Notice that the point $c$ is contained in $P_2$. Let $(v_h,v_{l})$ be any oriented edge not in $\Gamma_k$. Before the addition of $(v_i,v_{j})$, by the invariant there was a free segment $\sigma_{h,l}$ for $(v_h,v_l)$. By construction,  $(v_i,v_{j})$ intersects $\sigma_{h,l}$ at most once. If $(v_i,v_{j})$ and $\sigma_{h,l}$ intersect in a point $p$, let $p'$ be any point between $c$ and $p$ on $\sigma_{h,l}$ and let $\sigma'_{h,l}=\overline{cp'}$; if they do not intersect let $\sigma'_{h,l}=\sigma_{h,l}$. In both cases $\sigma'_{h,l}$ is completely contained in $P_2$. We claim that $\sigma'_{h,l}$ is a free segment for $(v_h,v_l)$. Because of the order used to add the edges, the span of $(v_h,v_l)$ is at least the span of $(v_i,v_j)$. This implies that $v_h$ and $v_{l}$ cannot both belong to $P_1$ (as otherwise the span of $(v_h,v_l)$ would be smaller than the span of $(v_i,v_j)$). We distinguish two cases.
	
	\begin{description}
		\item[Case 1:] Both $v_h$ and $v_{l}$ belong to $P_2$ (possibly coinciding with $v_i$ or $v_{j}$). Refer to Fig.~\ref{fi:star-shaped-b}. For any point $b$ of $\sigma'_{h,l}$, the polyline $\pi$ consisting of the two segments $\overline{v_hb}$ and $\overline{bv_{l}}$ is completely contained in $P_2$ and therefore does not intersects the edge $(v_i,v_{j})$  (except possibly at a common end-vertex if $v_h$ or $v_{l}$ coincide with $v_i$ or $v_{j}$). By the invariant, $\pi$ intersects any other existing edge at most once. Thus, $\sigma'_{h,l}$ is a free segment.
		
		\item[Case 2:] One between $v_h$ and $v_{l}$ belongs to $P_1$ and the other one belongs to $P_2$. Refer to Fig.~\ref{fi:star-shaped-c}. For any point $b$ of $\sigma'_{h,l}$, the polyline $\pi$ consisting of the two segments $\overline{v_hb}$ and $\overline{bv_{l}}$ intersects the edge $(v_i,v_{j})$ exactly once. By the invariant, $\pi$ intersects any other existing edge at most once. Thus, $\sigma'_{h,l}$ is a free segment.
	\end{description}
	
	From the argument above we obtain that the final drawing of $K_n$ has curve complexity one and extends $P$.
	By removing the edges of $K_n$ not in $G$, we obtain a polyline drawing $\Gamma$ of $G$ with curve complexity one that extends $P$.
	Moreover, $\Gamma$ partially preserves the topology of $G$. Namely, the circular order of the edges around each vertex and the external boundary are preserved by construction. Furthermore, since $G$ is outer, any two of its edges cross if and only if their four end-vertices appear interleaved when walking along its external boundary. This property is preserved in $\Gamma$, because the order of the vertices along $P$ is the same as the order of the vertices along the external boundary of $G$, and because any two edges cross at most once (either at a crossing or at a common endpoint).
\end{proof}

We now show how to exploit Lemma~\ref{le:extend} to compute a polyline drawing $\Gamma$ with constant curve complexity for any simple topological graph $G$ that has a biconnected planar skeleton $\sigma(G)$.  


\begin{theorem}\label{th:biconnected-skeleton}
Let $G$ be a simple topological graph that admits a planar augmentation whose planar skeleton is biconnected. Then $G$ has a polyline drawing with curve complexity at most one that partially preserves its topology. The curve complexity is worst-case optimal.
\end{theorem}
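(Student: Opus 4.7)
The plan is to reduce the theorem to Lemma~\ref{le:extend} by decomposing the drawing along the planar skeleton $\sigma$ of the hypothesized planar augmentation $G'$ of $G$ and drawing each face separately. Since $\sigma$ is biconnected, every face of $\sigma$ is bounded by a simple cycle; moreover, every crossing edge of $G'$ lies entirely within one face of $\sigma$ and has both endpoints on its boundary. For each bounded face $f$ of $\sigma$, the bounding cycle of $f$ together with the edges of $G'$ crossing inside $f$ forms an outer simple topological graph $G_f$ whose external boundary is exactly $\partial f$. The external boundary of $G$ itself coincides with the boundary of the unique unbounded face of $\sigma$ and contains no crossing edges, so it does not need special treatment beyond being drawn as a simple polygon.

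I would first produce a planar straight-line drawing $\Delta$ of $\sigma$ in which every bounded face is drawn as a star-shaped polygon. A convenient way to do this is to fan-triangulate each bounded face of $\sigma$ from an arbitrary apex vertex on its boundary, obtaining a maximal planar graph $\sigma^{+}$ (augmenting further, if necessary, to destroy separating triangles and to achieve triconnectivity), then compute a convex straight-line drawing of $\sigma^{+}$ (for instance via Tutte's barycentric method), and finally delete the triangulating edges. Since each triangle is drawn as a convex triangle and since the triangles incident to the fan apex tile the original face, the apex lies in the kernel of that face, so the face is star-shaped. For each bounded face $f$, let $P_f$ denote the resulting star-shaped polygon, and apply Lemma~\ref{le:extend} to $G_f$ with polygon $P_f$: this yields a polyline drawing of $G_f$ with curve complexity at most one that extends $P_f$ and partially preserves the topology of $G_f$. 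The union of these per-face drawings is a polyline drawing $\Gamma'$ of $G'$ with curve complexity at most one, and removing the edges of $G'\setminus G$ yields the desired drawing $\Gamma$ of $G$. Because Lemma~\ref{le:extend} preserves rotation system, external boundary, and crossing set locally inside every face, and because each vertex's global rotation is the concatenation of its local rotations across incident faces, these three properties are preserved globally.

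For the worst-case lower bound I would exhibit a simple topological graph whose planar skeleton is biconnected and whose prescribed crossing pattern cannot be realized by straight-line segments. A natural approach is to take a small non-stretchable topological configuration (in the spirit of Figure~\ref{fi:intro-a}) and add a Hamiltonian cycle of uncrossed edges to make the skeleton biconnected, then verify by a finite case analysis that no straight-line drawing partially preserves its topology. The main obstacle lies in the star-shaped drawing step: one must guarantee that a single planar straight-line drawing of $\sigma$ makes every bounded face star-shaped simultaneously. Biconnected planar graphs do not in general admit a drawing with all bounded faces convex, so the fan-triangulation argument combined with a careful convex-drawing algorithm on $\sigma^{+}$ is essential, and one has to verify that each fan apex genuinely lies in the kernel of its face once the triangulating edges are removed.
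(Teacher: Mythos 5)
Your overall strategy --- draw the planar skeleton so that every face is star-shaped and then invoke Lemma~\ref{le:extend} inside each face --- is exactly the paper's. The genuine gap is your treatment of the unbounded face: you assert that the external boundary of $G$ coincides with the boundary of the unbounded face of $\sigma$ and ``contains no crossing edges,'' and on that basis you only ever apply Lemma~\ref{le:extend} to bounded faces. Nothing in the hypothesis forbids crossing edges of $G'$ from being embedded in the external face of the skeleton (take any biconnected plane graph and add two mutually crossing edges in its outer face); since a drawing that partially preserves the topology must keep the same external boundary and the same crossing set, those edges must be drawn outside the polygon bounding the skeleton's outer face, and your construction says nothing about how to do this with one bend. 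The paper devotes a separate argument to precisely this case: it triangulates the external face as well, using an extra apex, and when cleaning up it retains that dummy vertex and its two boundary edges so that the former external face becomes an \emph{internal} star-shaped face $f^d$ containing all vertices of the outer boundary; Lemma~\ref{le:extend} is applied inside $f^d$, and the dummies are deleted afterwards. Some such device is needed to make your proof cover all inputs.

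Two further points, less central but worth fixing. First, your fan triangulation from a boundary apex can create parallel edges, because a chord $(a,v_i)$ added inside a face may already be an edge of $\sigma$ embedded elsewhere, which breaks the subsequent straight-line drawing step; the paper avoids this by stellating each non-triangular face from a \emph{new} dummy vertex placed in its interior, after which \emph{any} planar straight-line drawing of the resulting triangulation makes each original face star-shaped with the dummy vertex's position in its kernel (no convex-drawing machinery, and no ``destroying separating triangles'' is needed --- a simple planar triangulation is already triconnected). The apex-in-kernel verification you flag as the main obstacle is actually the easy part (the fan triangles tile the face and each is convex with corner $a$); the multi-edge issue and the outer face are the real ones. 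Second, the worst-case optimality claim is only gestured at: you describe how one might search for an example but exhibit and verify none, and note that a configuration ``in the spirit of Fig.~\ref{fi:intro-a}'' does not suffice by itself, since that graph admits a straight-line drawing that \emph{partially} preserves its topology --- the obstruction must survive the freedom to reorder crossings along an edge. The paper settles this half by exhibiting a concrete graph with triconnected skeleton (Fig.~\ref{fi:lowerBoundPartially}) admitting no straight-line drawing that partially preserves its topology.
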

\begin{proof}
	Let $G'$ be a planar augmentation of $G$ whose planar skeleton $\sigma(G')$ is biconnected. Each edge of $G' \setminus \sigma(G')$ is inside one face of $\sigma(G')$. Thus, our approach is to compute a drawing of $\sigma(G')$ where each face is drawn as a star-shaped polygon and then to add the missing edges inside each face by using Lemma~\ref{le:extend}. Since the technique of Lemma~\ref{le:extend} has to be slightly adapted to be applied to the external face, we assume first than no edge of $G' \setminus \sigma(G')$ is embedded inside the external face of $\sigma(G')$. We augment $\sigma(G')$ to a suitable planar triangulation by adding a vertex inside each non-triangular internal face $f$ and by connecting it to all the vertices in the boundary of $f$ in a planar way. Computing a straight-line drawing of the augmented graph and removing the dummy vertices and edges, we obtain the desired drawing $\Gamma_{\sigma}'$ whose internal faces are drawn as a start-shaped polygons. Let $f$ be a face of $\sigma(G')$, let $G_f$ be the subgraph of $G'$ consisting of the edges of $f$ plus the edges that are inside $f$, and let $P_f$ be the star-shaped polygon representing $f$ in $\Gamma_{\sigma}'$. By Lemma~\ref{le:extend}, $G_f$ admits a polyline drawing with curve complexity $1$ that weakly preserves the topology of $G_f$ and that extends $P_f$. By computing such a drawing for all faces of $\Gamma_{\sigma}'$ we obtain a polyline drawing $\Gamma'$ of $G'$ with curve complexity $1$. We now prove that $\Gamma'$ partially preserves the topology of $G'$. The drawing $\Gamma_{\sigma}'$ fully preserves the topology of $\sigma(G')$. Since all edges not in $\sigma(G')$ are added inside the face of $\sigma(G')$ in which they are embedded in $G'$, the only case in which the rotation system or the set of crossings could not be preserved is for edges that do not belong to $\sigma(G')$ and that are embedded inside the same face $f$ of $\sigma(G')$. By Lemma~\ref{le:extend} however the drawing of the graph $G_f$ consisting of the edges of each face $f$ plus the edges inside $f$ partially preserves the topology of $G_f$. Thus $\Gamma'$ partially preserves the topology of $G'$ and removing the edges of $G' \setminus G$ we obtain a drawing of $G$ that weakly preserves the topology of $G$.
	
	
	\begin{figure}[tbp]
		\centering
		\includegraphics[width=.2\textwidth]{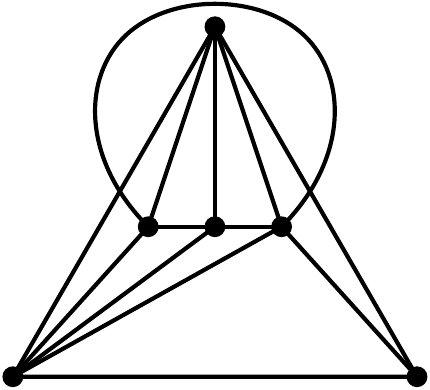}
		\caption{\small
			A simple topological graph with a biconnected planar skeleton that does not admit a straight-line drawing that partially preserves its topology.
		}
		\label{fi:lowerBoundPartially}
	\end{figure}
	
	The argument above assumes that no edge of $G' \setminus \sigma(G')$ is embedded inside the external face of $\sigma(G')$. If there are some edges embedded inside the external face $f^*$ of $\sigma(G')$, we proceed as follows. When triangulating $\sigma(G')$ to obtain a drawing with star-shaped faces, we also triangulate $f^*$ (since there are crossing edges embedded inside $f^*$, it must have degree larger than three). When dummy vertices ad edges are removed from the straight-line drawing of the augmented triangulated graph, we remove all dummy vertices ad edges except those that belong to the external boundary of the augmented graph (they are one vertex and two edges). In this way the resulting drawing has one dummy internal face $f^d$ whose boundary contains all the vertices of $f^*$ (and the only dummy vertex not removed). Face $f^d$ is also star-shaped and thus we can draw inside $f^d$ all the edges $E^*$ that are embedded inside $f^*$ in $G'$ using Lemma~\ref{le:extend}. Removing the dummy vertex and the dummy edges we obtain a drawing where the edges of $E^*$ are drawn inside $f^*$.

	Finally, we show that curve complexity one is optimal in the worst case. The graph of Fig.~\ref{fi:lowerBoundPartially} has a triconnected planar skeleton, and it is immediate to see that it does not admit a straight-line drawing that partially preserves its topology.
\end{proof}

If $\sigma(G)$ is connected, we can draw $G$ with three bends per edge.

\begin{theorem}\label{th:connected-skeleton}
Let $G$ be a simple topological graph that admits a planar augmentation whose planar skeleton is connected. Then $G$ has a polyline drawing with curve complexity at most three that partially preserves its topology.
\end{theorem}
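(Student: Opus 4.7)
\begin{sketch}
The plan is to reduce the case of a connected skeleton to the biconnected case handled by Theorem~\ref{th:biconnected-skeleton} through a local expansion at each cut vertex, at the price of a few extra bends per edge. Let $G'$ be a planar augmentation of $G$ whose planar skeleton $\sigma(G')$ is connected; if $\sigma(G')$ is already biconnected, Theorem~\ref{th:biconnected-skeleton} applies directly, so we focus on the case when $\sigma(G')$ has cut vertices.

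For each cut vertex $v$ of $\sigma(G')$ with $k \geq 2$ incident blocks $B_1, \ldots, B_k$ (in cyclic order around $v$), we construct an auxiliary simple topological graph $\widetilde G$ by splitting $v$ into $k$ copies $v^1, \ldots, v^k$: each copy $v^i$ inherits, as its incident edges, the contiguous arc of the rotation at $v$ consisting of the edges of $B_i$ together with the non-skeleton edges lying in $B_i$'s angular sector. We then biconnect the copies with an uncrossed dummy gadget localized at the original position of $v$, for instance a dummy central vertex $d_v$ joined to $v^1, \ldots, v^k$ by $k$ dummy edges forming a star; since the gadget is drawn in a small neighborhood of $v$ inside which no other edge is present, these dummy edges are uncrossed. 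After doing this for every cut vertex, $\sigma(\widetilde G)$ is biconnected, and Theorem~\ref{th:biconnected-skeleton} yields a polyline drawing $\widetilde \Gamma$ of $\widetilde G$ with curve complexity one that partially preserves its topology.

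To obtain a drawing of $G$, we collapse each gadget back to a single point $p_v$ placed inside the corresponding small neighborhood (for example, at the centroid of $v^1, \ldots, v^k$). For every edge $e$ attached to some $v^i$, the endpoint $v^i$ of $e$ is moved to $p_v$ by inserting one additional bend at the former position of $v^i$, so that outside the gadget's neighborhood the trajectory of $e$ is unchanged. Because the gadget can be taken arbitrarily small with respect to $\widetilde \Gamma$, this detour is local and introduces no new crossings, and the cyclic order of edges around $p_v$ matches the rotation at $v$ in $G$ thanks to the choice of the angular positions of the copies. Each collapse adds at most one bend per edge endpoint at a cut vertex, hence at most two bends per edge, and combined with the single bend given by Theorem~\ref{th:biconnected-skeleton} this yields at most $1 + 2 = 3$ bends per edge.

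The main obstacle is preserving the topology of $G$ throughout the construction: the rotation at each $p_v$ must reproduce the rotation at $v$ in $G$, which is ensured by placing the copies and the dummy edges coherently with the cyclic order of blocks around $v$; the set of crossings must remain unchanged, which follows from the locality of the gadget and of its collapse; and the external boundary must be preserved, which follows by placing the gadgets strictly in the interior of the drawing of $\sigma(\widetilde G)$. Removing the dummy edges at the end yields the desired polyline drawing of $G$ with curve complexity at most three.
\end{sketch}
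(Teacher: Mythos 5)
There is a genuine gap at the central step of your reduction: after splitting each cutvertex $v$ into block-copies $v^1,\dots,v^k$ and joining them by a star on a dummy vertex $d_v$, the planar skeleton $\sigma(\widetilde G)$ is \emph{not} biconnected, so Theorem~\ref{th:biconnected-skeleton} cannot be invoked. Each block $B_i$ is attached to the rest of the graph only through its copy $v^i$, hence every $v^i$ (and $d_v$ itself) is a cutvertex of the new skeleton. Concretely, if $\sigma(G')$ consists of two triangles sharing the cutvertex $v$, your construction produces two triangles joined by the path $v^1\,d_v\,v^2$, which has three cutvertices; if the skeleton is a path $a\,v\,b$, you obtain the path $a\,v^1\,d_v\,v^2\,b$, a tree. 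The same objection applies to any gadget whose only attachment points are the copies (a cycle through the copies, a wheel, etc.): removing $v^i$ still separates $B_i\setminus\{v^i\}$ from everything else. Nor can you hope to repair this by a further planar augmentation of $\widetilde G$: biconnecting requires edges joining two different blocks inside a face incident to $v$, and such a face may be filled with crossed edges incident to $v$, so any such connecting edge must in general cross them and therefore cannot be added to the skeleton as an uncrossed edge.

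This is exactly the difficulty the paper's proof is designed to absorb: it biconnects $\sigma(G')$ by adding, inside each face containing a cutvertex, edges between consecutive face-boundary vertices, \emph{accepting} that these edges cross the non-skeleton edges incident to cutvertices (at most twice per edge, once near each endpoint), then subdivides those crossings with dummy vertices to obtain a graph with biconnected skeleton, applies Theorem~\ref{th:biconnected-skeleton}, and converts the dummy vertices back into bends; the two uncrossed end ``pieces'' of an edge get no bends and the middle piece gets one, yielding the bound of three. Your collapsing step (contracting the gadget back to a point with one extra bend per incident edge) is plausible in itself, though it would still need care about overlapping corridors and about crossed edges drawn near $d_v$; but as it stands the argument fails earlier, because the hypothesis of Theorem~\ref{th:biconnected-skeleton} is never established.
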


\begin{proof}
	Let $G'$ be a planar augmentation of $G$ whose planar skeleton $\sigma(G')$ is connected. The idea is to add a set $E^*$ of edges to make $\sigma(G')$ biconnected and then use Theorem~\ref{th:biconnected-skeleton}. For each face $f$ (possibly including the external one) whose boundary contains at least one cutvertex we execute the following procedure.  Walk clockwise along the boundary of $f$ and let $v_0, v_1, v_2, \dots , v_k$ be the sequence of vertices in the order they are encountered during this walk, where the vertices that are encountered more than once (i.e. the cutvertices) appear in the sequence only when they are encountered for the first time. For each pair of consecutive vertices $v_{i-1}$ and $v_{i}$ (for $i=1,2,\dots,k$) in the above sequence, if $v_{i-1}$ and $v_i$ are not adjacent in $\sigma(G')$, add to $E^*$ the edge $(v_{i-1},v_i)$. See Fig.~\ref{fi:connected-a} and~\ref{fi:connected-c} for an example.
	
	
	With the addition of the edges of $E^*$, $\sigma(G')$ becomes biconnected (the boundary of each face is a simple cycle). In particular every added edge $(v_{i-1},v_i)$ connects vertices of two different biconnected components, and for every pair of biconnected components there is at most one edge of $E^*$ that connects them.

	\begin{figure}[tbp]
		\centering
		\begin{minipage}[b]{.28\textwidth}
			\centering 
			\includegraphics[page=1, width=\columnwidth]{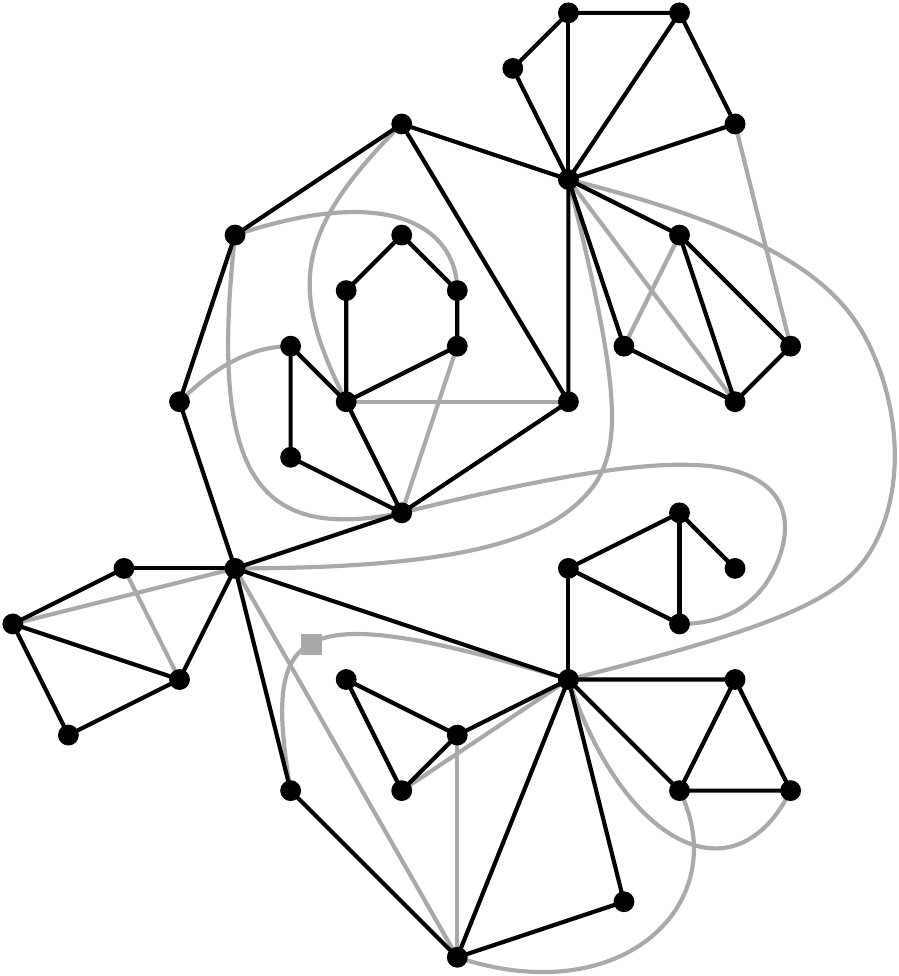}
			\subcaption{~}\label{fi:connected-a}
		\end{minipage}
		\hfil
		\begin{minipage}[b]{.28\textwidth}
			\centering 
			\includegraphics[page=3, width=\textwidth]{figure/connected}
			\subcaption{~}\label{fi:connected-c}
		\end{minipage}
		\hfil
		\begin{minipage}[b]{.28\textwidth}
			\centering 
			\includegraphics[page=4, width=\textwidth]{figure/connected}
			\subcaption{~}\label{fi:connected-d}
		\end{minipage}
		\caption{(a) A simple topological graph $G$. The planar skeleton $\sigma(G)$ of $G$ is shown in black. (b)  Augmentation of $\sigma(G)$ to make it biconnected. (d) Augmentation of $G$. Each edge of $G \setminus \sigma(G)$ (in gray) is crossed by the augmenting edges at most twice.}
	\end{figure}

	If we add the edges of $E^*$  to $G'$ (embedded in the same way with respect to $\sigma(G')$), we obtain a new topological graph such that the edges of $E^*$ cross the edges of $G' \setminus \sigma(G)$ (see Fig.~\ref{fi:connected-d}). In particular, the edges of $G' \setminus \sigma(G)$ that are crossed by the edges of $E^*$ are those incident to the cutvertices of $\sigma(G')$. Let $e=(u,v)$ be one such edge and suppose that $u$ is a cutvertex. In the circular order of the edges around $u$, the edge $e$ appears between two different biconnected components of $\sigma(G')$ sharing $u$; if $E^*$ contains an edge $e'$ connecting these two components, then $e'$ crosses $e$. Notice that $e'$ can be embedded in such a way that the crossing $c$ between $e'$ and $e$ is the first one encountered along $e$ when going from $u$ to $v$. In other words, the portion of $e$ from $u$ to $c$ is not crossed. Since both end-vertices of an edge can be cutvertices, each edge of $G' \setminus \sigma(G)$ is crossed by the edges of $E^*$ at most twice. Replacing each of the crossings created by the addition of $E^*$ with dummy vertices, we obtain a new topological graph $G''$ whose planar skeleton is biconnected. By Theorem~\ref{th:biconnected-skeleton} $G''$ admits a drawing that partially preserves its topology and such that each edge has at most one bend. Replacing dummy vertices with bends, we obtain a drawing of $G'$ that partially preserves its topology. We now show that the number of bends per edge is at most $3$. Let $e$ be any edge of $G'$. As described above, $e$ is crossed at most twice by the edges of $E^*$ and therefore $e$ is split in at most three ``pieces'' in $G''$. The two ``pieces'' that are incident to the original vertices are not crossed in $G''$ and therefore they belong to $\sigma(G'')$ and are drawn without bends. The third ``piece'' is not in $\sigma(G'')$ and is drawn with at most one bend. Thus, $e$ has at most three bends.
\end{proof}

 Theorems~\ref{th:biconnected-skeleton} and~\ref{th:connected-skeleton} show that constant curve complexity is sufficient for drawings that partially preserve the topology of graphs whose planar skeleton is connected. It is worth remarking that a drawing that fully preserves the topology may require $\Omega(n)$ curve complexity even if the planar skeleton is connected.
 Namely, the planar skeleton of the graphs associated with arrangements of pseudolines is always biconnected and, by Corollary~\ref{co:eppstein}, there exists one such graph that has $\Omega(n)$ curve complexity.

%

 One may wonder whether the constant curve complexity bound of Theorems~\ref{th:biconnected-skeleton} and~\ref{th:connected-skeleton} can be extended to the case of non-connected planar skeletons. This question is answered in the negative by the next theorem.

\begin{theorem}\label{th:disconnected-lower-bound}
	There exists a simple topological graph with $n$ vertices such that any drawing that partially preserves its topology has curve complexity $\Omega(\sqrt{n})$.
\end{theorem}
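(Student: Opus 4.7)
I would prove this theorem by reducing to the pseudoline arrangement lower bound underlying Corollary~\ref{co:eppstein}. Eppstein et al.\ give arrangements $\mathcal{L}_k$ of $k$ pseudolines whose polyline curve complexity is $\Omega(k)$. Starting from the associated graph $G_{\mathcal{L}_k}$ of Lemma~\ref{le:eppstein}, my plan is to build a simple topological graph $G$ with $n = \Theta(k^{4/3})$ vertices, a \emph{disconnected} planar skeleton, and for which every drawing partially preserving its topology has curve complexity $\Omega(k^{2/3}) = \Omega(\sqrt{n})$.

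The construction subdivides each pseudoline edge of $G_{\mathcal{L}_k}$ into $m = \Theta(k^{1/3})$ sub-edges by inserting $m-1$ internal vertices, with each sub-edge covering one block of an equitable partition of the crossings of the corresponding pseudoline. At every internal vertex I would attach a small uncrossed pendant gadget (e.g., a triangle) drawn locally around that vertex; the prescribed rotation at the internal vertex forces the drawing through it to traverse the two incident sub-edges consecutively, and because the gadgets are locally isolated and all sub-edges are crossed, the planar skeleton breaks into the original boundary plus many triangle components. Given any drawing $\Gamma$ of $G$ that partially preserves the topology, I would concatenate the drawings of the $m$ sub-edges of each path $P_i$ (representing pseudoline $\ell_i$) into a single polyline curve $\gamma_i$. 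The rotation constraints imposed by the gadgets ensure that $\{\gamma_i\}_{i=1}^k$ is a polyline realization of $\mathcal{L}_k$ (up to permutation of crossings within each block). By the $\Omega(k)$ lower bound applied to this realization, some $\gamma_i$ has at least $ck$ bends for a constant $c>0$. Its bends are accounted for by bends inside the sub-edges (at most $b$ per sub-edge, where $b$ is the curve complexity of $\Gamma$) plus at most $m-1$ kinks at its internal vertices; hence $bm + (m-1) \ge ck$, giving $b = \Omega(k/m) = \Omega(k^{2/3}) = \Omega(\sqrt{n})$.

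The main obstacle is showing that $\{\gamma_i\}$ really triggers the $\Omega(k)$ bound despite partial preservation permitting crossings on a single sub-edge to be permuted: a sub-edge holding a block of $k/m$ crossings could realize a locally different order and potentially shortcut Eppstein's lower bound. The gadgets enforce the order \emph{across} blocks, so what remains is a robustness statement about $\mathcal{L}_k$ itself, namely that every arrangement obtained by permuting crossings inside blocks of size $k/m$ still needs $\Omega(k)$ bends. I expect this to follow by adapting the counting argument in~\cite{DBLP:journals/corr/EppsteinGSU16}, and it is exactly what dictates the optimal choice $m = \Theta(k^{1/3})$ balancing the per-sub-edge lower bound $\Omega(k/m)$ against the $m-1$ free bends afforded by subdivision vertices.
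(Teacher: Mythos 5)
Your reduction has a genuine gap at exactly the point you flag as ``what remains'': the robustness of the Eppstein et al.\ lower bound under block-wise permutation of crossings. Partial preservation only fixes which pairs of sub-edges cross; after subdividing each pseudoline path into $m$ sub-edges, the concatenated curves $\gamma_1,\dots,\gamma_k$ realize an arrangement in which the crossings inside each block of size $\Theta(k/m)$ may appear in an arbitrary (realizable) order. The result of~\cite{DBLP:journals/corr/EppsteinGSU16} gives $\Omega(k)$ bends only for polyline realizations of \emph{their specific arrangement's topology}; it says nothing about the minimum over the whole family of block-permuted variants, and your argument needs that minimum to still be $\Omega(k)$. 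This is not a routine adaptation: the degenerate case $m=1$ shows the bound collapses completely (the paper itself observes that $G_{\mathcal{L}}$ has a straight-line drawing that partially preserves its topology), so the lower bound provably degrades with block size, and quantifying how it degrades for blocks of size $k^{2/3}$ would require reopening the internals of Eppstein et al.'s construction. In other words, the choice $m=\Theta(k^{1/3})$ is tuned under an assumption that is the actual mathematical content of the theorem and is left unproved. (Minor further issues: the pendant gadgets do nothing beyond disconnecting the skeleton --- the block order along each curve is already forced by the path structure --- and you would still need the step, as in Lemma~\ref{le:eppstein}, converting the drawing of the $\gamma_i$'s into a genuine pseudoline realization at $O(1)$ extra bends per curve; these are fixable, the robustness claim is not obviously so.)

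The paper avoids this difficulty by rigidifying rather than subdividing: it augments $G_{\mathcal{L}}$ with one vertex per cell of the arrangement and edges between vertices of adjacent cells (plus edges to the boundary cycle). Removing the crossing edges leaves a planar graph that is triconnected by Barnette's theorem, hence has a unique embedding, and each original crossing edge must then cross its sequence of ``ladder rung'' edges in the prescribed order, since any other order would create crossings not in the prescribed crossing set. Thus every drawing that partially preserves the topology of the augmented graph \emph{fully} preserves the topology of $G_{\mathcal{L}}$, and the $\Omega(N)$ bound of Lemma~\ref{le:eppstein} applies verbatim; the price is $n=\Theta(N^2)$ vertices (one per cell), which yields the same $\Omega(\sqrt{n})$ you were aiming for but with zero residual freedom to control. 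If you want to salvage your approach, you would have to prove the block-robust version of the pseudoline lower bound, which is a new and nontrivial statement.
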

\begin{proof}
	Let $\mathcal{L}$ be an arrangement of pseudolines and let $G_{\mathcal{L}}$ be the graph associated with $\mathcal{L}$. By Lemma~\ref{le:eppstein} any drawing that fully preserves the topology of $G_{\mathcal{L}}$ cannot have a better curve complexity than $\mathcal{L}$. On the other hand if we only want to partially preserve the topology, $G_{\mathcal{L}}$ can be realized without bends (see Fig.~\ref{fi:lowerBoundDisconnected-a} for a straight-line drawing of the graph of Fig.~\ref{fi:arrangement-b}). We now describe how to construct a supergraph $\overline{G}_{\mathcal{L}}$ of $G_{\mathcal{L}}$, such that in any drawing of $\overline{G}_{\mathcal{L}}$ that partially preserves its topology, the topology of the subgraph $G_{\mathcal{L}}$ is fully preserved. Refer to Fig.~\ref{fi:lowerBoundDisconnected-b} for an illustration concering the graph of Fig.~\ref{fi:arrangement-b}. The set $E^*$ of crossing edges of $G_{\mathcal{L}}$ form a set of cells inside the cycle $C$ of $G_{\mathcal{L}}$ (these cells correspond to the faces of the planarization of $G_{\mathcal{L}}$ that have at least one dummy vertex). For each of these cells, we add a vertex inside the cell and we connect two such vertices if the corresponding cells share a side. For those cells that have as a side an edge $e$ of $C$ we add an edge between the vertex added inside that cell and the two end-vertices of $e$. Let $\overline{G}_{\mathcal{L}}$ be the resulting topological graph and let $\overline{\Gamma}_{\mathcal{L}}$ be a drawing that partially preserves the topology of $\overline{G}_{\mathcal{L}}$. We claim that the sub-drawing $\Gamma_{\mathcal{L}}$ of $\overline{\Gamma}_{\mathcal{L}}$ representing $G_{\mathcal{L}}$ fully preserves the topology of $G_{\mathcal{L}}$. If we remove the edges in $E^*$, we obtain a planar subgraph $G'$ whose sub-drawing $\Gamma'$ in $\overline{\Gamma}_{\mathcal{L}}$ is planar. By construction, any two faces of $G'$ share at most one edge or at most one vertex. By Barnette's Theorem~\cite{barnette94} $G'$ is triconnected and therefore it has only one planar embedding, which is the one defined by $\overline{G}_{\mathcal{L}}$. Let $e=(u,v)$ be an edge of $E^*$. In $G'$ (and therefore in $\Gamma'$) there exists two paths $\pi_1=\langle u, u_1, u_2, \dots, u_k, v \rangle$ and $\pi_2=\langle u, v_1, v_2, \dots, v_k, v \rangle$ from $u$ to $v$ with the edges $e_i=(u_i,v_i)$ for every $i=1,2,\dots,k$. In $\overline{G}_{\mathcal{L}}$ the edge $e$ crosses the edges $e_1, e_2, \dots, e_k$ in this order and it crosses no other edge. Since the set of crossings is preserved in $\overline{\Gamma}_{\mathcal{L}}$ and there is no way for $e$ to cross the edges $e_1, e_2, \dots, e_k$ in a different order without creating another crossing, then $e$ cross $e_1, e_2, \dots, e_k$ in $\overline{\Gamma}_{\mathcal{L}}$ in the same order as in $\overline{G}_{\mathcal{L}}$. This is true for every edge of $E^*$, which implies that also the crossings between the edges of $E^*$ are preserved in the same order and that $\Gamma_{\mathcal{L}}$ is a drawing that fully preserves the topology of $G_{\mathcal{L}}$.
	
\begin{figure}[t]
	\centering
	\begin{minipage}[b]{.28\textwidth}
		\centering 
		\includegraphics[page=1, width=\textwidth]{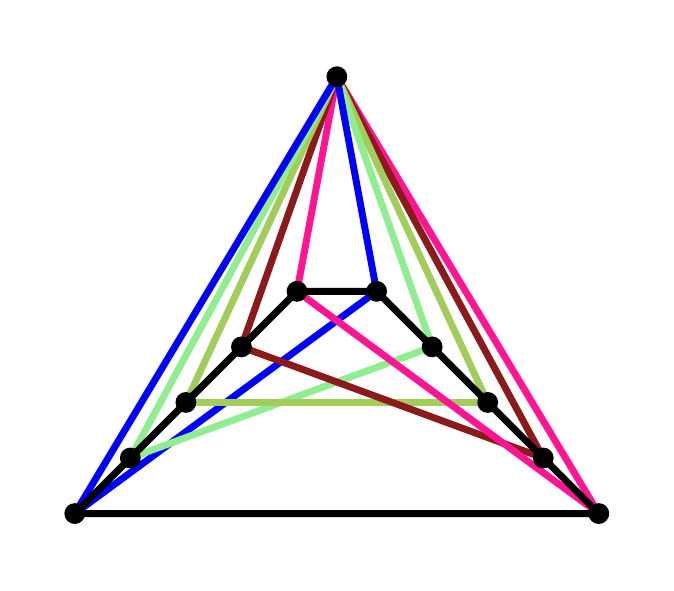}
		\subcaption{~}\label{fi:lowerBoundDisconnected-a}
	\end{minipage}
	\hfil
	\begin{minipage}[b]{.28\textwidth}
		\centering 
		\includegraphics[page=2, width=\textwidth]{figure/lowerBoundDisconnected}
		\subcaption{~}\label{fi:lowerBoundDisconnected-b}
	\end{minipage}
	\caption{(a) Straight-line drawing of the graph $G_{\mathcal{L}}$ of Fig.~\ref{fi:arrangement-b}. (b) The graph $\overline{G}_{\mathcal{L}}$ for the arrangement of Fig.~\ref{fi:arrangement-a}.\label{fi:lowerBoundDisconnected}}
\end{figure}
	
	Denote by $\mathcal{L}_N$ the arrangement of $N$ pseudolines defined by Eppstein et al.~\cite{DBLP:journals/corr/EppsteinGSU16}. By the argument above, any polyline drawing that partially preserves the topology of the graph $\overline{G}_{\mathcal{L}_N}$ contains a sub-drawing of $G_{\mathcal{L}_N}$ that fully preserves its topology and that therefore has curve complexity $\Omega(N)$ by Lemma~\ref{le:eppstein}. The number of vertices of $G_{\mathcal{L}_N}$ is $2N+1$ and the number of cells is $\Theta(N^2)$. This implies that the number of vertices of $\overline{G}_{\mathcal{L}_N}$ is $n=\Theta(N^2)$. Thus, any drawing that partially preserves the topology of $\overline{G}_{\mathcal{L}_N}$ has curve complexity $\Omega(N)=\Omega(\sqrt{n})$.
\end{proof}

Based on Theorem~\ref{th:disconnected-lower-bound} one may wonder whether $O(\sqrt{n})$ curve complexity is sufficient when the skeleton is not connected. The following theorem states a preliminary result in this direction, extending Theorem~\ref{th:connected-skeleton} to the case that the planar skeleton consists of at most $c$ connected components.

\begin{theorem}\label{th:non-connected-skeleton}
	Let $G$ be a simple topological graph that admits a planar augmentation whose planar skeleton has $c$ connected components. Then $G$ has a polyline drawing with curve complexity at most $4c-1$ that partially preserves its topology.
\end{theorem}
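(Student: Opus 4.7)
The plan is to extend the argument of Theorem~\ref{th:connected-skeleton} by first reducing the non-connected case to the connected case. Let $G'$ be a planar augmentation of $G$ whose planar skeleton $\sigma(G')$ has $c$ connected components $C_1, \ldots, C_c$. I would choose one representative vertex from each component and add a set $E_1$ of $c-1$ edges between these representatives so as to form a spanning tree over the components. Each edge of $E_1$ is added one at a time as a Jordan arc, and I would argue (via a standard bigon-reduction argument on Jordan arcs) that it can be drawn so as to cross every previously existing edge at most once, so that the resulting drawing remains a simple topological graph.

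Next, I would replace each crossing between an edge of $E_1$ and any other edge with a dummy vertex, obtaining a new topological graph $G''$. Since the subdivided edges of $E_1$ are uncrossed in $G''$ and they connect the former components, the planar skeleton $\sigma(G'')$ is connected. Theorem~\ref{th:connected-skeleton} therefore produces a polyline drawing $\Gamma''$ of $G''$ with curve complexity at most three that partially preserves the topology of $G''$. A drawing of $G$ is then recovered from $\Gamma''$ by merging the consecutive pieces of each original edge at the dummy vertices (so that each dummy vertex becomes a bend), removing the edges of $E_1$, and finally removing the edges of $G' \setminus G$. A routine check, analogous to the one at the end of the proof of Theorem~\ref{th:connected-skeleton}, confirms that the rotation system, the external boundary, and the crossing set of $G$ are preserved by this final drawing.

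For the curve complexity, observe that each original edge $e$ of $G$ is crossed at most $c-1$ times by the edges of $E_1$, since $|E_1| = c-1$ and any two edges of a simple topological graph cross at most once. Hence $e$ is subdivided into at most $c$ pieces in $G''$; each such piece contributes at most three bends from $\Gamma''$, and each of the (at most $c-1$) dummy vertices along $e$ becomes one additional bend after the pieces are merged, yielding at most $3c + (c-1) = 4c-1$ bends per edge. The main technical obstacle is the first step: showing that the $c-1$ connecting arcs of $E_1$ can indeed be inserted one at a time while keeping the simple-topological-graph property; beyond this, the argument is essentially a bookkeeping exercise layered on top of Theorem~\ref{th:connected-skeleton}.
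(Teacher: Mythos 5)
Your bookkeeping and the reduction to Theorem~\ref{th:connected-skeleton} are fine, and the idea of connecting the skeleton components with \emph{dummy} edges that are deleted at the end (so that their own bends are irrelevant) is a genuinely nice way to avoid the sleeve construction. However, the whole argument rests on your very first step, and that step has a real gap: you assert that the $c-1$ connecting arcs between arbitrarily chosen representative vertices can always be inserted so that each new arc crosses every existing edge (and every other new arc) at most once, ``via a standard bigon-reduction argument.'' This is not a standard fact, and the sketched argument does not go through: when you remove a bigon formed by the new arc and an edge $e$ by rerouting the arc along $e$, you may introduce new double crossings with all the edges that cross the detour portion of $e$, so there is no obvious monotone quantity that makes the process terminate in a position with at most one crossing per edge. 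Indeed, extending a simple topological graph by a new edge is known to be genuinely obstructed in general: there exist \emph{saturated} simple topological graphs (Kyn\v{c}l, Pach, Radoi\v{c}i\'c, and T\'oth) to which no edge can be added between any two non-adjacent vertices while keeping the drawing simple. Even if one only insists on the weaker condition you actually use (at most one crossing with each existing edge, ignoring the constraint about edges incident to the endpoints), you would still need a proof, and without it the bound ``each original edge is crossed at most $c-1$ times by $E_1$'' --- which is exactly what yields $3c+(c-1)=4c-1$ --- is unsupported.

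This is precisely the difficulty the paper's proof is built to sidestep: it assumes (w.l.o.g.) that $G$ is connected, so that $c-1$ connecting edges can be chosen among the \emph{existing} crossing edges of $G\setminus\sigma(G')$; these automatically cross every other edge at most once because $G$ is a simple topological graph, so no insertion lemma is needed. Since those connectors are real edges whose bends do count, the paper then subdivides only their mutual crossings and wraps each resulting piece in a sleeve, which makes the skeleton connected while keeping the connectors at $O(c)$ bends. If you want to keep your sleeve-free route, you would either have to prove the insertion claim, or sidestep it, e.g.\ by routing each dummy connector as an arbitrarily close parallel copy of an existing edge of $G\setminus\sigma(G')$ that joins two skeleton components (again assuming $G$ connected), which inherits the ``at most one crossing per edge'' property from that edge; as written, though, the proposal has a missing key lemma rather than a complete proof.
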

\begin{proof}
	We can assume that $G$ is connected. If not we can compute a drawing for each connected component. Let $G'$ be a planar augmentation of $G$ whose planar skeleton $\sigma(G')$ has $c$ connected components. Since $G$ is connected, there exists a set of edges of $G \setminus \sigma(G')$ that can be added to $\sigma(G')$ to make it connected. In particular, we can choose a set $E'$ with $c-1$ of these edges. Denote by $G''$ the graph obtained by adding the edges of $E'$ to $G'$. The edges of $E'$ can cross each other. If this is the case we replace each crossing between two edges of $E'$ with a dummy vertex, thus obtaining a new graph $G'''$. Denote by $E''$ the set of edges obtained by the subdivision of the edges in $E'$. Since each edge of $E'$ is crossed at most $c-2$ times, the set $E''$ has at most $(c-1)^2$ edges. We now use the sleeve method (see Section~\ref{se:fully}): we put a sleeve around each edge of $E''$. Let $G^{iv}$ be the resulting graph. The planar skeleton $\sigma(G^{iv})$ of $G^{iv}$ is connected since it contains all the edges of the original skeleton $\sigma(G')$ and all edges of the sleeves, which connected the different connected components of $\sigma(G')$. By Theorem~\ref{th:connected-skeleton} $G^{iv}$ admits a polyline drawing with curve complexity three that partially preserves its topology. Replacing the dummy vertices with bends and removing the dummy edges of the sleeves we obtain a drawing of $G'$ that partially preserves its topology. We claim that the curve complexity of this drawing is at most $4c-1$. Let $e$ be an edge of $G'$. If $e$ is an edge of $\sigma(G')$ is drawn without bends. If $e \in E'$, then $e$ is split in $G^{iv}$ in at most $c-1$ ``pieces''. Each ``piece'' has at most one bend and at most $c-2$ additional bends are created by the dummy vertices that split $e$, thus the total number of bends is $2c-3$. If $e$ does not belong to $\sigma(G')$ nor to $E'$, let $k$ be the number of sleeves traversed by $e$. Then $e$ is subdivided in $2k+1$ ``pieces'' (one for each sleeve and $k+1$ outside the sleeves). There is one bend for each ``piece'' plus one for each dummy vertex splitting $e$, thus the number of bends is $2k+1+2k=4k+1$. According to the technique of Theorem~\ref{th:connected-skeleton}, edge $e$ can be split by two additional dummy vertices at each end if it is incident to two cut-vertices of the planar skeleton, thus the number of bends is at most $4k+3$. The number of sleeves is at most $(c-1)^2$ (i.e., the size of $E''$), but the edge $e$ can traverse at most $c$ of them, because it crosses each edge of $E'$ at most once; thus $k \leq c-1$ and therefore the curve complexity is at most $4(c-1)+3=4c-1$.\end{proof}

\section{Polyline Drawings that Fully Preserve the Topology}\label{se:fully}

In this section we study polyline drawings of constant curve complexity for two meaningful families of beyond-planar graphs. Namely, we consider \emph{$k$-skew} graphs and \emph{$2$-plane} graphs. A simple topological graph $G = (V,E)$ is  $k$-skew if there is a set $F \subseteq E$  of $k$ edges such that $G' = (V, E \setminus F)$ does not contain crossings.  A simple topological graph is $2$-plane if every edge is crossed by at most two other edges. A $2$-plane graph with $n$ vertices can have at most $5n - 10$ edges and it is called \emph{optimal $2$-plane} if it has exactly $5n - 10$ edges.
We  prove that the graphs belonging to these two families admit a polyline drawing that fully preserves the topology and has constant curve complexity. A tool that we are going to use is the algorithm of Chiba et al.~\cite{Chiba1985} that receives as input a $3$-connected plane graph $G$ whose external face has $k \ge 3$ vertices, and  a convex polygon $P$ with $k$ corners. The algorithm computes a straight-line drawing $\Gamma$ of $G$ that fully preserves the topology of $G$, it has  polygon $P$ as its external face, and all internal faces are convex. Moreover, if three consecutive vertices belong to a same face and are collinear in the computed drawing, we can slightly perturb one of them without destroying the convexity of the other faces. Thus, we can assume that all faces of $\Gamma$ are strictly convex.

\subsection{$k$-skew Topological Graphs}

We first show that a $k$-skew topological graph admits a polyline drawing that fully preserves the topology of $G$ and has at most $2k$ bends per edge. The technique is based on an approach that we call the \emph{sleeve method} and that is illustrated in the following.

\smallskip
\noindent \textbf{The sleeve method.} Suppose that $G$ is a topological graph such that the removal of the edge $(s,t)$ makes $G$ without crossings such as in Fig.~\ref{fi:sleeve-a}.

\begin{figure}[h!]
	
	\centering
	\begin{minipage}[b]{.24\textwidth}
		\centering 
		\includegraphics[page=1, width=\textwidth]{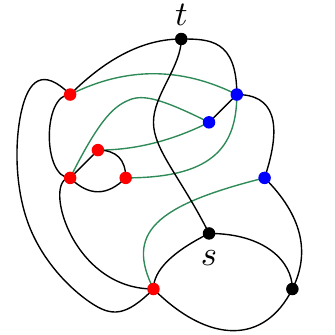}
		\subcaption{}\label{fi:sleeve-a}
	\end{minipage}\hfil
	\begin{minipage}[b]{.24\textwidth}
		\centering 
		\includegraphics[page=2, width=\textwidth]{figure/sleeveProcess2-2}
		\subcaption{}\label{fi:sleeve-b}
	\end{minipage}\hfil
	\begin{minipage}[b]{.24\textwidth}
		\centering 
		\includegraphics[page=3, width=\textwidth]{figure/sleeveProcess2-2}
		\subcaption{}\label{fi:sleeve-c}
	\end{minipage}\hfil
	\begin{minipage}[b]{.24\textwidth}
		\centering 
		\includegraphics[page=4, width=\textwidth]{figure/sleeveProcess2-2}
		\subcaption{}\label{fi:sleeve-d}
	\end{minipage}
	\caption{
		(a) A 1-skew topological graph $G$;  deletion of the edge $(s,t)$ gives a planar topological graph. (b) Two dummy vertices are added to each edge in $E_\chi$. The dummy vertices become left and right vertices; the previously left and right vertices are now neither left nor right.
		(c) Paths $p_L$ (colored red) and $p_R$ (colored blue) are added. (d) The graph obtained from $G$ by adding the ``sleeve''.
	}\label{fi:sleeve}
\end{figure}

Let $E_\chi$ be the set of edges that cross $(s,t)$ and suppose that $\alpha$ is a crossing between edges $(s,t)$ and $(u,v) \in E_\chi$ in $G$. If the clockwise order of the vertices around $\alpha$ is $\langle
s, u , t , v \rangle$, then $u$ is a \emph{left} vertex and $v$ is a \emph{right} vertex (with respect to the ordered pair $(s,t)$ and the crossing $\alpha$). This is illustrated in Fig.~\ref{fi:sleeve-a}: left vertices are coloured red, and right vertices are coloured blue.
We add a ``sleeve'' around $(s,t)$, as follows (refer to Fig.~\ref{fi:sleeve-b}).
Number the edges of $E_\chi= \{ e_1, e_2, \ldots , e_p \}$ in the order of their crossings $\alpha_1 ,  \alpha_2 , \ldots , \alpha_p$ along $(s,t)$, so that $e_i = (u_i, v_i)$ crosses $(s,t)$ at $\alpha_i$, $u_i$ is left, and $v_i$ is right.
We subdivide each edge $(u_i, v_i)$
with dummy vertices $u'_i$ and $v'_i$ so that the edge $(u_i,v_i)$ becomes a path $(u_i, u'_i, v'_i, v_i)$ with the crossing point $\alpha_i$  in between $u'_i$ and $v'_i$. Note that after this subdivision, $u'_i$ is left and $v'_i$ is right, and $u_i$ and $v_i$ are neither left nor right. Next we add a path $p_L$ that begins at $s$ and visits each of the left dummy vertices $u'_i$ in the order $u_1 , u_2 , \ldots , u_p$, and ends at $t$. Similarly we add a path $p_R$ that visits $s$, all the right vertices, and then $t$.
This is illustrated in Fig.~\ref{fi:sleeve-c}.
We call the cycle formed by $p_L$ and $p_R$ a \emph{sleeve}.
Note that the interior of the sleeve contains the edges $(u'_i, v'_i)$ and the edge $(s,t)$, but no other vertices or edges (Fig.~\ref{fi:sleeve-d}).
%
%
%
The next theorem explains how to draw $k$-skew graphs with curve complexity $2k$.

\begin{theorem}\label{th:k-skew}
Every $k$-skew simple topological graph admits a polyline drawing with curve complexity at most $2k$ that fully preserves its topology.
\end{theorem}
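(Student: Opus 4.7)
The plan is to apply the sleeve method to each edge of $F = \{f_1, \ldots, f_k\}$ in turn, and then to produce a straight-line drawing of the resulting ``sleeved'' graph $G^{*}$ whose restriction to $G$ is a polyline drawing with the claimed curve complexity that fully preserves the topology. After all $k$ applications, every crossing of $G$ is confined to the interior of at least one sleeve.

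The first step is to count the bends. Sleeving $f_i$ subdivides every edge that crosses $f_i$ with two dummy vertices (a left one on $p_L^i$ and a right one on $p_R^i$), but leaves $f_i$ itself undivided. Hence, an edge $e \notin F$ accumulates at most $2k$ dummies (two per sleeve it enters, and $e$ enters at most $k$ sleeves since, by $k$-skewness, every crossing of $e$ is with an edge of $F$). An edge $f_i \in F$ accumulates two dummies each time another $F$-edge that crosses $f_i$ is sleeved, contributing at most $2(k-1) \le 2k$ dummies in total. Once dummies are replaced by bends in the final drawing, no edge exceeds curve complexity $2k$.

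The second step is to realize $G^{*}$ with a straight-line drawing that fully preserves its topology. I would draw the planar skeleton of $G^{*}$ --- the sleeve cycles together with the uncrossed edges of $G$ and the outer pieces of all subdivided edges --- by augmenting it to $3$-connectivity and invoking Chiba et al.'s convex drawing algorithm. Each sleeve becomes a convex polygon whose corners are $s_i$, $t_i$, the left dummies along $p_L^i$, and the right dummies along $p_R^i$. Inside each such polygon I would draw $f_i$ as the straight diagonal from $s_i$ to $t_i$, and each middle piece $(u'_j, v'_j)$ as the straight segment between its left and right dummies; by convexity and the correct cyclic order of dummies along the boundary, these segments cross $f_i$ in the prescribed order, so the planarization of $G^{*}$ is realized with the required embedding. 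Removing the dummy vertices (turned into bends) and the sleeve edges then yields the desired drawing of $G$.

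The main obstacle is the case of $F$-$F$ crossings. When $f_i, f_j \in F$ cross at a point $\alpha$, the interiors of their sleeves overlap, and the four dummies introduced near $\alpha$ (two on $f_i$ from sleeving $f_j$, two on $f_j$ from sleeving $f_i$) sit on a small quadrilateral. I would treat this quadrilateral as a shared ``crossing cell'' of the two sleeves: the four boundary paths $p_L^i, p_R^i, p_L^j, p_R^j$ are rerouted to use the cell's sides as their common portion, so that no two sleeve boundaries cross each other, and the cell's interior is used to realize the crossing of $f_i$ and $f_j$ as two straight segments meeting at $\alpha$. With this local repair, the planar skeleton is a genuine plane graph, so Chiba et al.'s algorithm applies and the bend count computed above remains valid, yielding a polyline drawing of $G$ with curve complexity at most $2k$ that fully preserves its topology.
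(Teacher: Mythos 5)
Your proposal follows essentially the same route as the paper: put a sleeve around each edge whose removal planarizes the graph, augment the resulting planar skeleton to a triconnected plane graph, draw it with the convex-drawing algorithm of Chiba et al., insert the pieces inside each (strictly) convex sleeve face as straight chords, and convert sleeve dummies into bends, which gives the $2k$ bound by the same count. The one genuine divergence is the handling of crossings between two edges of $F$. The paper first replaces every such crossing by a dummy vertex, so the edges actually sleeved (the set $F'$) are pairwise crossing-free; then two sleeves share at most one vertex, triconnectivity follows from Barnette's theorem after triangulating all non-sleeve faces, and an edge of $F$ ends up with at most $k-1$ bends (one per crossing with another $F$-edge), every other edge with at most $2k$. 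You instead keep these crossings as genuine crossings inside a shared ``crossing cell''; this can be made to work (keep the quadrilateral on the four dummies as a non-triangulated face, draw it strictly convex, realize the crossing as its two diagonals), and it even buys bend-free $F$--$F$ crossing points at the cost of $2(k-1)$ rather than $k-1$ bends on $F$-edges, still within the bound. However, the repair as you describe it is the weak point: ``rerouting the four boundary paths to use the cell's sides as their common portion'' does not, taken literally, define a plane graph, since each side of the cell can bound only one face besides the cell itself, so the two sub-sleeves meeting at a corner cannot both run along the same side. The consistent version is to split each sleeve at the cell and pinch each sub-sleeve at its corresponding corner (the corners on $f_i$ for the two sub-sleeves of $f_i$, the corners on $f_j$ for those of $f_j$), letting the cell's sides face the outer regions, which are then triangulated together with all other non-sleeve, non-cell faces so that Barnette's theorem applies. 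You should also make the 3-connectivity augmentation explicit in this way (a generic ``augment to 3-connectivity'' could destroy the sleeve and cell faces you need to keep empty) and note the strict-convexity perturbation so that the two diagonals of each cell actually cross in its interior. With those details pinned down your argument is correct; the paper's planarize-first step simply sidesteps the rerouting issue altogether.
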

\begin{proof}
Suppose that  $G = (V,E)$ is a topological graph and there is a set $F \subseteq E$ of $k$ edges such that deleting all the edges in $F$ from $G$ gives a planar topological graph. An example with $k=2$ is in Fig.~\ref{fi:kskewA-a}.
\begin{figure}[tbp]
  \centering
  \begin{minipage}[b]{.18\textwidth}
		\centering 
		\includegraphics[page=1, width=\textwidth]{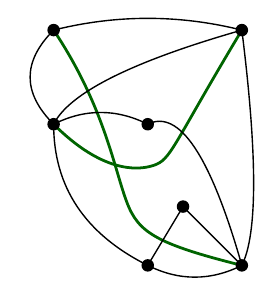}
		\subcaption{~}\label{fi:kskewA-a}
	\end{minipage}\hfil
	\begin{minipage}[b]{.18\textwidth}
		\centering 
		\includegraphics[page=2, width=\textwidth]{figure/kskewA-2}
		\subcaption{~}\label{fi:kskewA-b}
	\end{minipage}\hfil
	\begin{minipage}[b]{.18\textwidth}
		\centering 
		\includegraphics[page=3, width=\textwidth]{figure/kskewA-2}
		\subcaption{~}\label{fi:kskewA-c}
	\end{minipage}
  \caption{
(a) A topological graph $G$ with a set $F$ of 2 edges (in green) whose deletion makes $G$ planar.
(b) A topological graph $G''$ formed from $G$ by splitting the edges of $F$ with a dummy vertex and adding a sleeve around each portion of the split edges.
(c) The graph obtained by deleting the interior of each sleeve in $G''$ and triangulating the graph except for the faces formed by the sleeves.
}\label{fi:kskewA}
\end{figure}
Replace each crossing between a pair of edges in $F$ with a dummy vertex, and let $G'$ be the resulting graph. In $G'$ there is a set $F'$ of edges such that no two edges in $F'$ cross, and deleting all the edges in $F'$ from $G'$ gives a planar topological graph. Here $|F'| \leq k + 2c$, where $c$ is the number of crossings between edges in $F$. Also, note that the number of such crossings on each edge in $F$ is at most $k-1$. Now add a sleeve around each edge $(s,t) \in F'$ using the sleeve method, and let $G''$ be the resulting graph (see Fig.~\ref{fi:kskewA-b}).
Note that two such sleeves do not share any edge, and they share at most one vertex. Delete the interior of each sleeve in $G''$ to give a planar topological graph $G'''$.
Note that each sleeve of $G''$ gives a face of $G'''$. Now triangulate $G'''$ except for the faces of $G'''$ formed by the sleeves (see Fig.~\ref{fi:kskewA-c}).

The resulting graph $G^{iv}$ is triconnected by Barnette's Theorem~\cite{barnette94}, since two faces share at most one edge or at most one vertex. We can construct a planar drawing $\Gamma^{iv}$ of $G^{iv}$ using the convex drawing algorithm of Chiba et al.~\cite{Chiba1985}. Each face of $\Gamma^{iv}$ is convex, including each face that comes from a sleeve. Drawing the edges of $G''$ inside each sleeve as straight-line segments gives a straight-line drawing of $G''$.

\begin{figure}[h!]
	\centering
	\begin{minipage}[b]{.3\textwidth}
		\centering 
		\includegraphics[page=1, width=\textwidth]{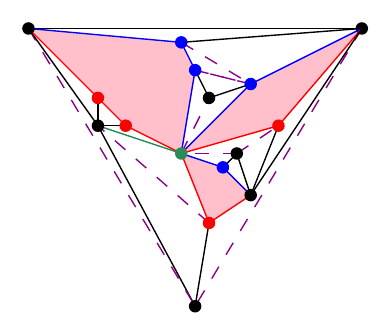}
		\subcaption{}\label{fi:kskewB-a}
	\end{minipage}\hfil
	\begin{minipage}[b]{.3\textwidth}
		\centering 
		\includegraphics[page=2, width=\textwidth]{figure/kskewB-2}
		\subcaption{}\label{fi:kskewB-b}
	\end{minipage}\hfil
	\begin{minipage}[b]{.3\textwidth}
		\centering 
		\includegraphics[page=3, width=\textwidth]{figure/kskewB-2}
		\subcaption{}\label{fi:kskewB-c}
	\end{minipage}
	\captionof{figure}{
		(a) A convex drawing $\Gamma^{iv}$ of $G^{iv}$, from the algorithm of Chiba et al.~\cite{Chiba1985}. (b) Drawing the edges of $G''$ inside each face obtained from a sleeve.
		(c) Removing the dummy edges and vertices gives a drawing of $G$ with at most $2k$ bends on each edge.
	}
	\label{fi:kskewB}
\end{figure}

Deleting the dummy edges of the sleeves, and replacing the dummy vertices of the sleeves by bends, we have a polyline drawing $\Gamma$ of $G$ that fully preserves the embedding of $G$.
The only bends are (1) at the crossing points between edges of $F$, and (2) at the dummy vertices of the sleeves. Let $e$ be an  edge of $G$. If $e \in E\setminus F$, then $e$ crosses at most $k$ edges (those in $F$) and each of these crossings creates two dummy vertices in a sleeve of $G''$, thus resulting in $2k$ bends. If $e \in F$, then it has bends at the crossings with other edges of $F$, which are at most $k-1$. An example of this procedure is depicted in Fig.~\ref{fi:kskewB}.
\end{proof}

By Theorem~\ref{th:k-skew} we can draw a $1$-skew topological graph with two bends per edge. We now  prove that these graphs can be drawn using only one bend per edge. To this aim we first recall some results from~\cite{eades2015}. We say that a vertex is \emph{inconsistent} with respect to the edge $(s,t)$ if it is both left and right with respect to $(s,t)$, and \emph{consistent} otherwise. 
For example, the graph in Fig.~\ref{fi:inconsistent-a} has an inconsistent vertex.
\begin{figure}[t]
  \centering
  \begin{minipage}[b]{.18\textwidth}
		\centering 
		\includegraphics[page=1, width=\textwidth]{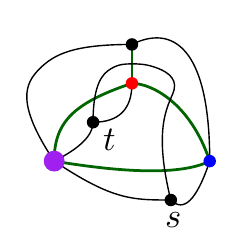}
		\subcaption{~}\label{fi:inconsistent-a}
	\end{minipage}\hfil
	\begin{minipage}[b]{.18\textwidth}
		\centering 
		\includegraphics[page=2, width=\textwidth]{figure/inconsistent-2}
		\subcaption{~}\label{fi:inconsistent-b}
	\end{minipage}
  \caption{(a) A 1-skew graph with an inconsistent vertex (larger and purple).
(b) A 1-skew graph with an internal inconsistent face (shaded), in which every vertex is consistent.}
  \label{fi:inconsistent}
\end{figure}
Observe that in a straight-line drawing of a topological graph,
an inconsistent vertex would have to be both left and right of the straight line through $s$ and $t$.
This gives the following necessary condition.
\begin{lemma}
\label{le:leftRight}
\emph{\cite{eades2015}}
\label{le:inconsistentVertex}
A 1-skew simple topological graph with an inconsistent vertex has no straight-line drawing that fully preserves its topology.
\end{lemma}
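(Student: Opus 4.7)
The plan is to argue by contradiction, assuming that the $1$-skew graph $G$ admits a straight-line drawing $\Gamma$ fully preserving its topology and deriving a geometric impossibility from the existence of the inconsistent vertex $w$. Let $(s,t)$ be the edge of $G$ whose removal yields a planar graph; by the definition of inconsistency, there exist two crossings $\alpha_1, \alpha_2$ on $(s,t)$ where $w$ appears as the left vertex at $\alpha_1$ and as the right vertex at $\alpha_2$, witnessed by two edges $(w,v_1)$ and $(w,v_2)$ of $G$. In $\Gamma$ these three edges are straight segments and, because $\Gamma$ fully preserves the topology, $\overline{wv_1}$ and $\overline{wv_2}$ cross $\overline{st}$ at $\alpha_1$ and $\alpha_2$ with exactly the same clockwise cyclic orders $\langle s, w, t, v_1\rangle$ and $\langle s, v_2, t, w\rangle$ as in $G$.

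The key step is then to translate these cyclic orders into a statement about half-planes. I would orient the supporting line $\ell$ of the segment $\overline{st}$ from $s$ to $t$, and denote by $H^+$ and $H^-$ the two open half-planes into which $\ell$ partitions the plane. A direct inspection of the four arms meeting at a transverse straight-line crossing shows that the clockwise order $\langle s, w, t, v_1\rangle$ at $\alpha_1$ is realized precisely when $w$ lies in one specific half-plane, say $H^+$ (and $v_1$ in $H^-$), while the order $\langle s, v_2, t, w\rangle$ at $\alpha_2$ symmetrically forces $w$ to lie in $H^-$. Since $w$ is a single point and $H^+ \cap H^- = \emptyset$, this is a contradiction, proving the lemma.

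The only subtle point I anticipate is ruling out the degenerate case in which $w$ lies on the line $\ell$ itself, since then the clean half-plane classification above does not immediately apply. This is handled by two simple observations: in a simple topological drawing no edge passes through a non-incident vertex, so $w \notin \overline{st}$; and if $w$ lay on $\ell$ but outside $\overline{st}$, then each straight segment $\overline{wv_i}$ would meet $\overline{st}$ either in an endpoint or not at all, contradicting the existence of the preserved proper crossing $\alpha_i$. Once this degenerate case is excluded, the half-plane argument goes through verbatim and yields the claim.
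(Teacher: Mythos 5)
Your proof is correct, and it follows essentially the same route the paper indicates: the paper states the lemma as a citation to~\cite{eades2015} preceded by exactly the observation you formalize, namely that in a straight-line drawing an inconsistent vertex would have to lie strictly on both sides of the line through $s$ and $t$. Your translation of the preserved clockwise orders at the two crossings into membership in the two opposite open half-planes, together with the exclusion of the degenerate collinear case, is a faithful and complete elaboration of that argument.
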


Without additional assumptions, the converse of Lemma~\ref{le:leftRight} is false.
For an example,
consider Fig.~\ref{fi:inconsistent-b}; this graph has no straight-line drawing, even though all vertices are consistent.
The problem is that the \emph{internal face} $(s,u,t,v)$ has both left and right vertices; as such,
this \emph{face} is \emph{inconsistent}. To explore the converse of Lemma~\ref{le:leftRight}, we
can assume that the topological graph is \emph{maximal} 1-skew (that is, no edge can be added while retaining the property of being 1-skew). Namely,  it has been proven that every 1-skew simple topological graph $G$ with no inconsistent vertices can be augmented with dummy edges so that the resulting graph has no inconsistent vertices, it is maximal $1$-skew, and it fully preserves the topology of its subgraph $G$~\cite{eades2015} .
%
%
Note that both the simple topological graphs in Fig.~\ref{fi:inconsistent} are maximal 1-skew. We denote the set of left (resp. right) vertices of a 1-skew topological graph $G$ by $V_L$ (resp. $V_R$), the subgraph of ${G}$ induced by $V_L \cup \{s,t \}$ (resp. $V_R \cup \{s,t \}$) by $G_L$ (resp. ${G}_R$), the union of ${G}_L$ and ${G}_R$ by $G_{LR}$. Note that $G_L$ and $G_R$ are induced subgraphs, but $G_{LR}$ is \emph{not necessarily induced} as a subgraph of $G$.
%
%
The following is proved in~\cite{eades2015}.

\begin{lemma}
\label{th:1skewMax}
\emph{\cite{eades2015}}
Let $G$ be a maximal 1-skew graph with all vertices consistent. Then:
\begin{inparaenum}

\item [(a)] $G_{LR}$ has exactly one inconsistent face, and this face contains both $s$ and $t$; and

\item [(b)] $G$ has a straight-line drawing that fully preserves its topology if and only if the inconsistent face of $G_{LR}$ is the external face (of $G_{LR}$).

\end{inparaenum}
\end{lemma}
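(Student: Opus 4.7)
The first step is to show that $G_{LR}$ is planar with the embedding inherited from $G$. By $1$-skewness all crossings of $G$ lie on $(s,t)$, and by consistency $V_L\cap V_R=\emptyset$; since every edge of $G_{LR}$ has both endpoints in $V_L\cup\{s,t\}$ or both in $V_R\cup\{s,t\}$, no edge of $G_{LR}$ is one of the crossing edges of $(s,t)$. Now call an edge of $G_{LR}$ an $L$-edge if it lies in $G_L$ and an $R$-edge if it lies in $G_R$; the edge $(s,t)$ is the unique edge of both types. Walking along the boundary of a face $f$ of $G_{LR}$ and tracking the type of each consecutive edge, a switch between $L$ and $R$ can occur only at a vertex of $(V_L\cup\{s,t\})\cap(V_R\cup\{s,t\})=\{s,t\}$. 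Because a boundary walk closes up, the number of switches is even, so any inconsistent face has at least two switches and therefore contains both $s$ and $t$ on its boundary.

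For uniqueness in (a), I would analyze the rotation at $s$ (symmetrically at $t$). The edge $(s,t)$ splits the other incident edges at $s$ into two arcs, and the local definition of left and right at each crossing of $(s,t)$ forces one arc to consist only of $L$-edges and the other only of $R$-edges. Hence the two faces of $G_{LR}$ incident to $(s,t)$ are each of a single type and thus consistent, and any inconsistent face must lie ``opposite'' $(s,t)$, incident to both $s$ and $t$ but not using $(s,t)$ on its boundary. The rotation analysis at $s$ leaves only one candidate, namely the face whose boundary traverses an $L$-path from $s$ to $t$ and an $R$-path back; the maximality hypothesis is invoked here to guarantee that $G_L$ and $G_R$ are filled in enough that this candidate is indeed a single face of $G_{LR}$. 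For (b), necessity follows at once: in a straight-line drawing $\Gamma$, the line through $s$ and $t$ divides the plane into two open half-planes, and the crossing rule $\langle s,u,t,v\rangle$ forces $V_L$ into one half-plane and $V_R$ into the other, so $G_L$ and $G_R$ sit on opposite sides of the segment $(s,t)$ and the external face of $G_{LR}$ wraps around both.

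For the converse direction of (b), assume the inconsistent face $f^{\mathrm{inc}}$ of $G_{LR}$ is the external face. The plan is to augment $G_{LR}$ to a $3$-connected plane graph with the same outer face by inserting dummy chords inside each consistent internal face, and to prescribe as outer polygon a convex polygon with $s$ and $t$ at opposite corners, with the left vertices placed along one side in the order dictated by $\partial f^{\mathrm{inc}}$ and the right vertices along the other. Applying the algorithm of Chiba et al.~\cite{Chiba1985} yields a strictly convex straight-line drawing in which $(s,t)$ becomes a chord separating $G_L$ from $G_R$; each edge of $E_\chi$ drawn as a straight segment between its endpoints then creates exactly the prescribed single crossing with $(s,t)$ and no other crossings, because the two halves of the drawing are convex and disjoint. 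Removing the dummy edges yields the required drawing of $G$. The main obstacle I anticipate is the uniqueness argument in (a): the rotation-based dichotomy at $s$ and $t$ has to be combined with maximality to rule out configurations where the candidate opposite face is split into several pieces by missing edges, and a secondary difficulty in (b) is matching the cyclic order of $V_L$ and $V_R$ along the outer polygon to the order of crossings along $(s,t)$ prescribed by the embedding.
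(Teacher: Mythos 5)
First, a point of comparison is impossible here: the paper does not prove this lemma at all, it imports it verbatim from \cite{eades2015}, so your argument can only be judged on its own terms. On those terms it has genuine gaps, the first in part (a). Your switch-parity argument only shows that an inconsistent face has at least two type switches, each occurring at $s$ or $t$; since a face boundary is in general a closed walk and may visit a vertex twice, both switches could a priori occur at the same vertex, so ``the face contains both $s$ and $t$'' does not yet follow. More importantly, the actual content of (a) --- that the inconsistent face is \emph{unique} --- is only asserted: ``the rotation analysis at $s$ leaves only one candidate'' and ``maximality is invoked here'' are placeholders, not arguments. Maximality is essential (for non-maximal $1$-skew graphs $G_{LR}$ can have several inconsistent faces), and the claim that the two faces of $G_{LR}$ incident to $(s,t)$ are each of a single type is precisely the kind of structural statement that needs the maximality hypothesis made concrete, not just the local rotation at $s$.

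The second gap is in the sufficiency direction of (b). You place the left and right vertices on the two arcs of a convex outer polygon and draw each crossing edge as a straight segment, claiming there are no spurious crossings ``because the two halves of the drawing are convex and disjoint.'' The algorithm of Chiba et al.~\cite{Chiba1985} guarantees convex \emph{faces}, not that the drawings of $G_L$ and $G_R$ occupy convex regions; and the endpoints of a crossing edge need not lie on the outer (inconsistent) face at all. What the embedding of $G$ actually guarantees is that the left endpoint of each crossing edge lies on the boundary of the face of $G_{LR}$ on the left side of $(s,t)$ (symmetrically for the right endpoint), because in $G$ that edge crosses only $(s,t)$. Consequently a straight segment between the two endpoints can cut through edges of $G_{LR}$, cross other crossing edges (crossings that do not exist in $G$), leave the relevant face through an edge other than $(s,t)$, or meet $(s,t)$ out of position --- and a drawing that \emph{fully} preserves the topology must reproduce the prescribed order of crossings along $(s,t)$ and keep each crossing edge inside the two faces of the planarization it traverses in $G$. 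You flag the ordering issue as a ``secondary difficulty,'' but handling it (essentially, drawing the two faces incident to $(s,t)$ so that the crossing edges can be routed inside them in the prescribed order) is the heart of the sufficiency proof in \cite{eades2015}; as written, your construction does not go through.
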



Let $(s,t)$ be the edge of $G$ whose removal makes $G$ planar. It is clear that after adding a sleeve around edge $(s,t)$, the conditions of Lemma~\ref{th:1skewMax} are satisfied and thus, we can compute a straight-line drawing, which after removing the dummy vertices of the sleeve, gives rise to a drawing with at most two bends per edge. To prove that one bend per edge suffices, we need a more subtle argument.

\begin{figure}[t]
  \centering
  \begin{minipage}[b]{.18\textwidth}
		\centering 
		\includegraphics[page=1, width=\textwidth]{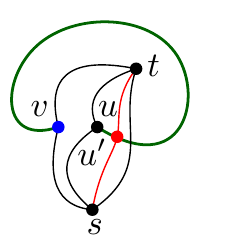}
		\subcaption{~}\label{fi:halfSleeveGLR-a}
	\end{minipage}\hfil
	 \begin{minipage}[b]{.18\textwidth}
		\centering 
		\includegraphics[page=2, width=\textwidth]{figure/halfSleeveGLR-2}
		\subcaption{~}\label{fi:halfSleeveGLR-b}
	\end{minipage}\hfil 
	\begin{minipage}[b]{.18\textwidth}
		\centering 
		\includegraphics[page=3, width=\textwidth]{figure/halfSleeveGLR-2}
		\subcaption{~}\label{fi:halfSleeveGLR-c}
	\end{minipage}\hfil 
	\begin{minipage}[b]{.18\textwidth}
		\centering 
		\includegraphics[page=4, width=\textwidth]{figure/halfSleeveGLR-2}
		\subcaption{~}\label{fi:halfSleeveGLR-d}
	\end{minipage}
  \caption{
(a) A left half-sleeve is added to the graph $G$ in Fig.~\ref{fi:inconsistent}(b) to form $G^{*L}$.
(b) $G^{*L}_{LR}$ has an internal inconsistent face.
(c) A right half-sleeve is added to the graph $G$ in Fig.~\ref{fi:inconsistent}(b) to form $G^{*R}$.
(d) $G^{*R}_{LR}$ has no internal inconsistent face.
  }
  \label{fi:halfSleeveGLR}
\end{figure}

\begin{theorem}\label{th:1skew}
	Every $1$-skew simple topological graph admits a polyline drawing with curve complexity at most one that fully preserves its topology. The curve complexity is worst-case optimal.
\end{theorem}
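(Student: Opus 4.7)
The plan is to refine the sleeve method of Theorem~\ref{th:k-skew} into a \emph{half-sleeve}: instead of subdividing every edge crossing $(s,t)$ with two dummy vertices and adding both $p_L$ and $p_R$, I would subdivide each such edge with a single dummy placed on one chosen side (left or right) and add only the corresponding sleeve path. Every edge of $G$ that crosses $(s,t)$ then carries exactly one dummy vertex, which becomes a single bend after the reverse substitution, while $(s,t)$ itself and all non-crossing edges receive no dummies at all, giving curve complexity one.

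I would first augment $G$ to a maximal 1-skew topological graph. Let $G^{*L}$ (resp.\ $G^{*R}$) denote the graph obtained by attaching a left (resp.\ right) half-sleeve around $(s,t)$: subdivide each crossing edge $(u_i,v_i)$ with a single dummy $u'_i$ between $u_i$ and the crossing $\alpha_i$, and insert the path $p_L = s, u'_1, \dots , u'_p, t$ (symmetrically for $G^{*R}$). The key observation is that every vertex of $G^{*L}$ (and $G^{*R}$) is consistent: the only new left (resp.\ right) vertices are the dummies, which never appear on the opposite side of any crossing, and the original endpoints of crossing edges are no longer left (resp.\ right) with respect to the modified edges. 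By Lemma~\ref{th:1skewMax}(a), each of $G^{*L}_{LR}$ and $G^{*R}_{LR}$ therefore contains a unique inconsistent face through $s$ and $t$.

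The crux is to show that \emph{at least one} of $G^{*L}_{LR}$ and $G^{*R}_{LR}$ has its inconsistent face as the external face. The intended argument is a Jordan-curve analysis: the path $p_L$ together with $(s,t)$ partitions the plane into a region containing the original left vertices and one containing the original right vertices, and the inconsistent face of $G^{*L}_{LR}$ lies in the right region; symmetrically, the inconsistent face of $G^{*R}_{LR}$ lies in the left region. A short case analysis, illustrated by the two configurations of Fig.~\ref{fi:halfSleeveGLR}, shows that these two regions sit on opposite sides of the external boundary of the pre-sleeve graph $G_{LR}$, so exactly one of the two choices places the inconsistent face externally. Lemma~\ref{th:1skewMax}(b) then produces a straight-line drawing of the chosen $G^{*L}$ or $G^{*R}$ that fully preserves its topology; replacing each dummy by a bend and removing the half-sleeve path yields the desired drawing of $G$ with at most one bend per edge.

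The worst-case optimality is witnessed by the 1-skew topological graph of Fig.~\ref{fi:intro-a}, which admits no straight-line drawing preserving its topology (as can be checked via Lemma~\ref{le:inconsistentVertex} or Lemma~\ref{th:1skewMax}(b)), so at least one bend is unavoidable. The main obstacle I anticipate is the planar case analysis verifying that the left and right half-sleeves always give complementary placements of the inconsistent face with respect to the external face, thereby guaranteeing that at least one orientation succeeds.
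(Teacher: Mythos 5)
Your overall strategy is the same as the paper's: attach a \emph{half-sleeve} (one dummy per crossing edge, on one side only, joined by a single path $p_L$ or $p_R$), observe that all vertices become consistent, invoke Lemma~\ref{th:1skewMax} to reduce the problem to placing the unique inconsistent face of $G^{*L}_{LR}$ or $G^{*R}_{LR}$ on the outside, and convert dummies to bends. However, the crux of the theorem --- that at least one of the two half-sleeves puts the inconsistent face on the external face --- is exactly the step you leave as an ``intended argument'' and an ``anticipated obstacle,'' and your sketch of it is not yet a proof. As stated it is also slightly off: you claim \emph{exactly} one of the two choices succeeds, but both can succeed (e.g.\ when $G$ already admits a straight-line drawing); what must be shown is that they cannot \emph{both fail}. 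The vague assertion that the left and right regions ``sit on opposite sides of the external boundary of the pre-sleeve graph $G_{LR}$'' does not by itself pin down where the inconsistent face of $G^{*R}_{LR}$ lies. The paper closes this gap with a concrete argument you would need to reproduce: assume the inconsistent face $f$ of $G^{*L}_{LR}$ is internal; since all left vertices of $G^{*L}_{LR}$ lie on $p_L$, the face $f$ is bounded by $p_L$ together with a walk $w_R$ of right vertices from $t$ to $s$; extract from $w_R$ a simple path $q_R$, so that $q_R$ together with the edge $(s,t)$ bounds a cycle containing \emph{every} left vertex of $G$ in its interior; hence every left $s$--$t$ path $q_L$ lies inside that cycle, and therefore the inconsistent face of $G^{*R}_{LR}$, which is bounded by $p_R$ and such a $q_L$, must be the external face.

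Two smaller points. First, the order of operations matters: Lemma~\ref{th:1skewMax} applies to \emph{maximal} 1-skew graphs with all vertices consistent, and the augmentation result of~\cite{eades2015} that produces a maximal graph requires the input to have no inconsistent vertices. Since the original $G$ may well have inconsistent vertices (Fig.~\ref{fi:inconsistent-a}), you should add the half-sleeve first (this is what makes every vertex consistent) and only then augment to a maximal 1-skew graph preserving the topology of the subgraph, rather than maximalizing $G$ up front as you propose --- after adding the half-sleeve the graph is in general no longer maximal, so the lemma would not apply directly in your ordering. Second, your lower-bound witness (Fig.~\ref{fi:intro-a}, via Lemma~\ref{le:inconsistentVertex}) is fine and matches what the paper relies on.
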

\begin{proof}
	Instead of placing a sleeve around the edge $(s,t)$, we use a ``half-sleeve'', as follows. Again let $E_\chi$ be the set of edges that cross $(s,t)$. We 1-subdivide each edge $(u,v) \in E_\chi$ with a dummy vertex on the left side of the crossing that $(u,v)$ makes with $(s,t)$, then add a path $p_L$ that begins at $s$ and visits each of the left dummy vertices in the order that there incident edges cross $(s,t)$, and ends at $t$. Denote the graph obtained from $G$ by adding this ``left half-sleeve'' as above by $G^{*L}$. Similarly, we could add a ``right half-sleeve'' to obtain a topological graph $G^{*R}$.
	It is clear that every vertex in both $G^{*L}$ and $G^{*R}$ is consistent. Note also that we have only added one dummy vertex on each edge $(u,v) \in E_\chi$; we aim to draw each of these edges with only one bend per edge. However, it is not clear that the internal faces of $G^{*L}_{LR}$ and $G^{*R}_{LR}$ are consistent.
	Consider, for example, the graph $G$ in Fig.~\ref{fi:inconsistent}(b).
	For this graph, Fig~\ref{fi:halfSleeveGLR} shows  $G^{*L}$, $G^{*R}$,  $G^{*L}_{LR}$ and $G^{*R}_{LR}$.
	Note that $G^{*L}_{LR}$ has an internal inconsistent face, while  $G^{*L}_{LR}$ does not.

	In order to prove  that at most one of the graphs $G^{*L}_{LR}$  and $G^{*R}_{LR}$ has an internal inconsistent face,
	suppose that the inconsistent face $f$ of $G^{*L}_{LR}$ is internal. Note that all the left vertices of $G^{*L}_{LR}$ lie on the path $p_L$,
	and so $p_L$ forms part of the boundary of $f$. Thus $f$ consists of $p_L$, then a walk $w_R$ of right vertices that begins at $t$ and ends at $s$.
	If we traverse $f$ in a clockwise direction, the interior is on the right of each edge.
	Note that the edge $(s,t)$ lies outside this face.
	Further, an edge $(u',v)$ from $E_\chi$ that crosses $(s,t)$ has a left vertex on the left and a right vertex on the right;
	this is illustrated in Fig.~\ref{fi:insideFace1-a}.
	\begin{figure}[t]
		\centering
		\begin{minipage}[b]{.3\textwidth}
			\centering 
			\includegraphics[page=1, width=\textwidth]{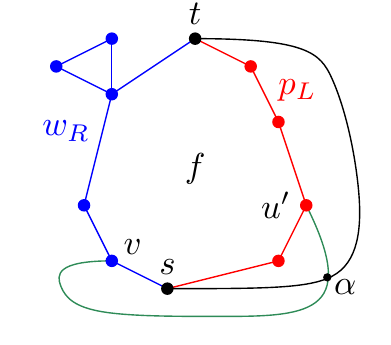}
			\subcaption{}\label{fi:insideFace1-a}
		\end{minipage}\hfil
		\begin{minipage}[b]{.3\textwidth}
			\centering 
			\includegraphics[page=2, width=\textwidth]{figure/insideFace1-2}
			\subcaption{}\label{fi:insideFace1-b}
		\end{minipage}\hfil
		\begin{minipage}[b]{.3\textwidth}
			\centering 
			\includegraphics[page=3, width=\textwidth]{figure/insideFace1-2}
			\subcaption{}\label{fi:insideFace1-c}
		\end{minipage}
		\caption{
			(a) The face $f$ consists of the path $p_L$ from $s$ to $t$, then a walk $w_R$ of right vertices that begins at $t$ and ends at $s$.
			(b) Every vertex $u$ that is a left vertex in $G$ lies inside the cycle $c$ formed by $q_R$ and the edge $(s,t)$.
			(c) The inconsistent face of $G^{*R}_{LR}$ consists of the path $p_R$, plus a path $q_L$ of vertices that are left in $G$.
		}
		\label{fi:insideFace1}
	\end{figure}
	Now the walk $w_R$ may not be a simple path, but it contains a simple path $q_R$; concatenating $p_L$ with $q_R$ gives a simple cycle.
	Note that in $G^{*L}$, every vertex $u$ that is a left vertex in $G$ is inside this cycle in $G^{*L}$. Thus in $G$,
	every vertex $u$ that is a left vertex in $G$ is inside the cycle $c$ formed by $q_R$ and the edge $(s,t)$.
	This is illustrated in Fig.~\ref{fi:insideFace1-b}.
	It follows that every path $q_L$ of left vertices from $s$ to $t$ lies inside the cycle formed by $q_R$ and the edge $(s,t)$. Now consider the graph $G^{*R}$. The inconsistent face of $G^{*R}_{LR}$ consists of the path $p_R$,
	plus a path $q_L$ of vertices that are left in $G$. It is clear that this is the outside face of $G^{*R}_{LR}$; see Fig.~\ref{fi:insideFace1-c}. This concludes the proof.
\end{proof}

\subsection{Optimal $2$-plane Graphs}

Let $G$ be a simple optimal $2$-plane graph with $n$ vertices. Bekos et al.~\cite{DBLP:conf/compgeom/Bekos0R17} proved that the planar skeleton $\sigma(G)$ of $G$ is a \emph{pentangulation} with $n$ vertices, i.e., each face of $\sigma(G)$ is a simple $5$-cycle, which we call \emph{pentagon}, and $\sigma(G)$ spans all the vertices of $G$. See, for example, Fig.~\ref{fi:o2p-g}. Moreover, each face of $\sigma(G)$ has five crossing edges in its interior, which we call \emph{chords} in the following. Bekos et al. proved that $\sigma(G)$ is always $2$-connected; we can prove that it is actually $3$-connected.

\begin{figure}[t]
	\centering
	\begin{minipage}[b]{.3\textwidth}
		\centering 
		\includegraphics[page=1, width=\textwidth]{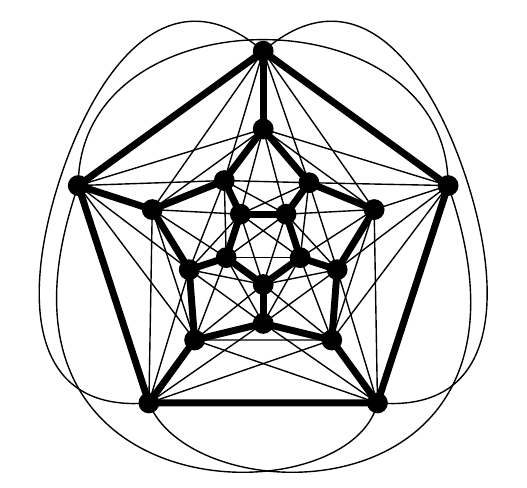}
		\subcaption{~}\label{fi:o2p-g}
	\end{minipage}
	\hfil
	\begin{minipage}[b]{.3\textwidth}
		\centering 
		\includegraphics[page=2, width=\textwidth]{figure/o2p}
		\subcaption{~}\label{fi:o2p-gamma}
	\end{minipage}
	\caption{(a) A simple optimal $2$-plane graph $G$; the underlying pentangulation $\sigma(G)$ of $G$ is bold. (b) A drawing of the outer face of $\sigma(G)$ (bold edges) and of its chords with two bends in total. }
\end{figure}

\begin{lemma}\label{le:2bends-in-total}
	Every optimal $2$-plane graph has a polyline drawing that fully preserves its topology and with two bends in total.
\end{lemma}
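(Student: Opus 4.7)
\begin{sketch}
The plan is to combine the 3-connectivity of $\sigma(G)$ (established in the preceding claim) with the convex drawing algorithm of Chiba et al., and to spend the two bends only on two carefully selected chords of the outer pentagonal face.

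First, I would pick any pentagonal face $f_0 = (u_1, u_2, u_3, u_4, u_5)$ of $\sigma(G)$ to serve as the outer face, and then augment $\sigma(G)$ with a single dummy edge $e^* = (u_1, u_3)$ embedded inside $f_0$; this splits $f_0$ into a triangle $T = (u_1, u_2, u_3)$ and a quadrilateral $Q = (u_1, u_3, u_4, u_5)$. The augmented graph is still 3-connected, so I can apply Chiba et al.'s algorithm with $T$ as the outer face to obtain a straight-line drawing in which every internal face -- including $Q$ and each other pentagonal face of $\sigma(G)$ -- is strictly convex.

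Next, inside each internal pentagonal face I would draw its five chords as straight segments between the non-adjacent pairs of vertices: strict convexity ensures that these segments form a pentagram realizing the correct crossing pattern without any bends. The same works inside the convex quadrilateral $Q$: its two diagonals $(u_1, u_4)$ and $(u_3, u_5)$ are themselves two of the chords of $f_0$, and they can be drawn as straight segments with zero bends.

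Finally, I would remove $e^*$, which merges the exterior of $T$ with the interior of $Q$ into a single unbounded region bounded by the five edges of $f_0$. The chord $(u_1, u_3)$ of $f_0$ is drawn as the straight segment along the former location of $e^*$ (zero bends). The remaining two chords of $f_0$, namely $(u_2, u_4)$ and $(u_5, u_2)$, each have $u_2$ as one endpoint (on $T$) and a vertex of $Q$ as the other; I would draw each as a polyline with exactly one bend placed in the exterior of $T$ on the side opposite to $u_2$, so that one segment travels through the exterior of $T$ and the other enters $Q$ through the former location of $e^*$ to reach its target vertex. This contributes exactly the two bends claimed.

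The main obstacle is to argue that the two bend points can be placed so that the resulting polyline chords realize the correct pentagram crossing pattern of $f_0$ -- namely that each of $(u_2, u_4)$ and $(u_5, u_2)$ crosses $(u_1, u_3)$ and exactly one of the two diagonals of $Q$ -- without incurring spurious crossings with the pentagon edges of $f_0$. I expect this to follow from choosing the bend points sufficiently far from $T$ and exploiting the strict convexity of $T$ and $Q$ produced by Chiba et al.'s algorithm, together with the fact that the chord $(u_1,u_3)$ drawn along the former $e^*$ cleanly separates the merged outer region so that the two polyline chords are forced to cross it in the required way.
\end{sketch}
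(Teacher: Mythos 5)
Your construction is essentially the paper's own proof up to relabeling: the paper likewise uses the $3$-connectivity of $\sigma(G)$, adds the outer-face chord $(b,e)$ (your $e^*=(u_1,u_3)$) to split the outer pentagon into a triangle and a quadrilateral, applies Chiba et al.\ with the triangle as the prescribed outer face, draws the pentagram chords straight inside each strictly convex inner pentagon and the two quadrilateral diagonals straight, and routes the remaining two outer chords with one bend each through the exterior. The only difference is cosmetic: the paper prescribes an equilateral outer triangle so that the placement of the two bends is immediate, whereas you leave the outer shape free and argue the bend placement directly, which is fine (and you could equally fix the triangle's shape, since Chiba et al.\ allows a prescribed convex outer polygon).
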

\begin{proof}
	We first prove that the planar skeleton $\sigma(G)$ of an optimal $2$-plane graph $G$ is $3$-connected. Suppose, for a contradiction, that $\sigma(G)$ contains a separation pair $\{u,v\}$ such that its removal disconnects $\sigma(G)$ into $c$ components, for some $c \ge 2$. Then there are (at least) $c$ distinct faces of $\sigma(G)$, such that both $u$ and $v$ are incident to these faces (each of these faces is shared by two of the $c$ components). Since each face of $\sigma(G)$ contains five chords in its interior, it follows that $G$ contains (at least) $c$ parallel edges having $u$ and $v$ as end-vertices, which contradicts the assumption that $G$ is simple.
	
	Let $\{a,b,c,d,e\}$ be the five vertices of the external face of $\sigma(G)$ in the order they appear when walking clockwise along its boundary. Let $P^*$ be the graph obtained by adding the chord $(b,e)$ to $\sigma(G)$. We use the algorithm by Chiba et al.~\cite{Chiba1985} to compute a  drawing $\Gamma$ of $P^*$ such that the external face is an equilateral triangle. Note that all the inner faces of $P^*$ are pentagons, except for the $4$-cycle $\{b,c,d,e\}$. Let $f$ be any pentagon of $P^*$, since it is drawn strictly convex in $\Gamma$, its five chords can be drawn with straight-line segments such that each segment is entirely contained in $f$ (except for its endpoints). Now we aim at drawing the four remaining chords of the external face of $\sigma(G)$ (the chord $(b,e)$ is already drawn) with two bends in total. Since the triangle $\{a,b,e\}$ is equilateral and the $4$-cycle $\{b,c,d,e\}$ is strictly convex, such a drawing can be obtained by representing edges $(c,e)$ and $(b,d)$ with straight-line segments and edges $(a,d)$ and $(a,c)$ with one bend each, as shown in Fig.~\ref{fi:o2p-gamma}.
\end{proof}

\begin{figure}[t]
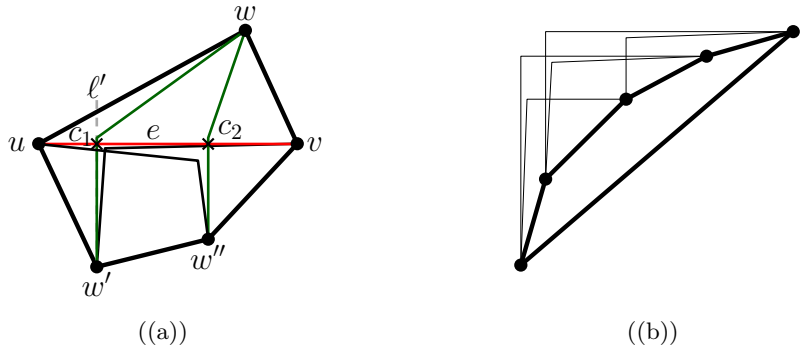

	\centering
	\begin{minipage}[b]{.3\textwidth}
		\centering 
		\includegraphics[page=3, width=\textwidth]{figure/o2p}
		\subcaption{~}\label{fi:o2p-lac-1}
	\end{minipage}
	\hfil
	\begin{minipage}[b]{.3\textwidth}
		\centering 
		\includegraphics[page=4, width=\textwidth]{figure/o2p}
		\subcaption{~}\label{fi:o2p-lac-2}
	\end{minipage}
	\caption{Drawing the chords of (a) an inner face and (b) the outer face, such that they use (at most) one bend each and they cross at large angles. }
\end{figure}

\begin{lemma}\label{le:LAC}
	Every optimal $2$-plane graph has a polyline drawing with curve complexity one that fully preserves its topology and such that every crossing angle is at least $\frac{\pi}{2}-\epsilon$, for any given $\epsilon > 0$.
\end{lemma}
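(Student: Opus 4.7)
The plan is to extend the construction of Lemma~\ref{le:2bends-in-total}. As before, I would first compute a planar drawing $\Gamma_\sigma$ of $\sigma(G)$ augmented with the chord $(b,e)$ via the algorithm of Chiba et al.~\cite{Chiba1985}, so that every internal face is a strictly convex pentagon and the outer region is an equilateral triangle $\{a,b,e\}$ together with a strictly convex $4$-cycle $\{b,c,d,e\}$. The new ingredient would be a \emph{local} claim: given any strictly convex pentagon $P$ with vertices $v_1,\dots,v_5$ in cyclic order, the five diagonals $(v_i,v_{i+2})$ can be drawn inside $P$ with one bend per diagonal, realizing the pentagram crossing pattern and with every crossing angle in $[\pi/2-\epsilon,\pi/2+\epsilon]$.

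To prove the local claim, I would place the five bend points $b_1,\dots,b_5$ inside a tiny disk $D$ around the centroid $c$ of $P$, chosen small enough that (i) $D$ is interior to $P$, (ii) every segment from a vertex of $P$ to any point of $D$ stays inside $P$, and (iii) any such segment has direction within $O(\mathrm{diam}(D))$ of the direction $\overrightarrow{v_i c}$. Since all bends lie in $D$, each crossing between two diagonals happens inside $D$ and its angle is controlled by the orientation of the corresponding pair of bend-side segments. Viewing each diagonal $d_i=(v_i,v_{i+2})$ as two straight pieces $\overline{v_i b_i}$ and $\overline{b_i v_{i+2}}$, I would place the $b_i$ at the vertices of a small regular pentagon inscribed in $D$ whose rotational offset is a free parameter. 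The five crossing angles are then a continuous function of this offset and of $\mathrm{diam}(D)$; a symmetry-plus-continuity argument would show that the offset can be tuned so that all five angles simultaneously lie within $\epsilon$ of $\pi/2$, the key observation being that as $\mathrm{diam}(D)\to 0$ the angles are determined purely by the five centroid-to-vertex directions of $P$, so one degree of rotational freedom suffices to balance the deviations from $\pi/2$.

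Applying the local claim inside each internal pentagonal face of $\Gamma_\sigma$ draws the corresponding five chords with one bend each and near-perpendicular crossings. For the outer face the argument is essentially the same, carried out inside the equilateral triangle $\{a,b,e\}$ and the strictly convex quadrilateral $\{b,c,d,e\}$: the remaining four chords (besides $(b,e)$, which is already drawn straight in $\Gamma_\sigma$) each come with one bend giving two coordinates, ample freedom to satisfy the angular constraints. Because each chord and its two crossings are confined to the unique face of $\sigma(G)$ in which it is embedded in $G$, the rotation system at every vertex, the external boundary, and the set and order of crossings along every edge are inherited from $\sigma(G)$, so the drawing fully preserves the topology of $G$. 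The main obstacle I anticipate is the verification of the local claim: the count of ten bend coordinates against five angular constraints makes the simultaneous near-perpendicular realization plausible, but one must check that a suitable near-regular configuration of bends is nondegenerate and that the continuity argument is robust to $P$ being far from regular.
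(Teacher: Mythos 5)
There is a genuine gap, and it is in the key ``local claim''. If all five bend points are confined to a disk $D$ around the centroid $c$ of the pentagon $P$ and you let $\mathrm{diam}(D)\to 0$, then every segment of every chord has direction converging to one of the five fixed directions $\overrightarrow{c\,v_i}$, and each of the five crossings occurs near $D$ between two such segments. Hence each crossing angle converges to the angle between two of the centroid-to-vertex directions of $P$ --- a quantity determined by $P$ alone. For a regular pentagon these directions differ by multiples of $72^\circ$, so every crossing angle tends to (approximately) $72^\circ$ or $36^\circ$, and in general it is bounded away from $\pi/2$. The rotational offset of your small regular pentagon of bends does not give continuous control over these limiting angles: it can only influence \emph{which} pair of near-spoke segments realizes each crossing (a discrete choice) plus corrections of order $\mathrm{diam}(D)$, so no tuning of the offset can push all five angles into $[\pi/2-\epsilon,\pi/2+\epsilon]$ once $D$ is small. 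The degrees-of-freedom count (ten coordinates versus five constraints) is misleading precisely because your construction throws away most of that freedom by forcing the bends into a shrinking neighborhood of one point; near-perpendicularity requires the two segments at each crossing to be nearly orthogonal, which is incompatible with all segments being nearly radial from $c$. (A secondary concern: with every chord routed through a tiny neighborhood of $c$, you must also verify that the five pairs of chords sharing an endpoint do not acquire spurious crossings there, which further constrains the placement.)

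The paper's construction is instructive about what is actually needed: inside each face it is inherently asymmetric. The longest chord $e=(u,v)$ is drawn as a horizontal straight segment; the two chords from the apex vertex $w$ are drawn with vertical segments through $e$ (bend just above $e$), so they cross $e$ exactly orthogonally; and the remaining two chords get their bends placed \emph{near the crossing points on $e$}, so that each consists of an almost-vertical and an almost-horizontal piece, making their crossings with the other chords arbitrarily close to $\pi/2$. For the outer face the paper does not reuse the triangle-plus-quadrilateral frame of Lemma~\ref{le:2bends-in-total}; it invokes the algorithm of Chiba et al.~\cite{Chiba1985} with the outer face drawn convex along an $xy$-monotone curve, so the outer chords can likewise be drawn with one bend and nearly axis-parallel segments. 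If you want to salvage your approach, the fix is to abandon the centroid cluster and place each bend adjacent to one of the crossings it participates in, with the incident segments steered toward two nearly orthogonal reference directions, which is essentially the paper's argument.
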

\begin{proof}
	We use again the result by Chiba et al.~\cite{Chiba1985}. This time, we use it to compute a  drawing $\Gamma$ of $\sigma(G)$ such that the outer face is as convex polygon having all its corners along an $xy$-monotone curve; see Fig.~\ref{fi:o2p-lac-2} for an illustration. Let $f$ be an inner face of $\sigma(G)$, which is drawn strictly convex in $\Gamma$. Let $e=(u,v)$ be the longest among the chords of $f$; we shall assume that $e$ is drawn horizontal, up to a rotation of the drawing. The boundary of $f$ is formed by two paths that connect $u$ and $v$, one consisting of two edges and the other consisting of three edges. We denote by $w$ the vertex in the shorter path, and by $w'$ and $w''$ the two vertices in the other path. By possibly mirroring the drawing we can assume that $w$ is above $e$ and that $w'$ is to the left of $w''$; see Fig.~\ref{fi:o2p-lac-1}.  Consider the vertical half-line $\ell'$ starting at $w'$. Since $e$ is the longest chord of $f$, the crossing point of $\ell'$ with $e$ is inside $f$, as otherwise $(v,w')$ would be a chord of $f$ longer than $e$. Then we place the bend point on $\ell'$ slightly above $e$. The edge $(w,w'')$ is drawn analogously, so these two chords both cross $e$ orthogonally. The bend of the edge $(v,w')$ is placed below and to the right of the crossing point $c_1$ between $(w,w')$ and $e$; by choosing the bend point arbitrarily close to $c_1$ we obtain that $(v,w')$ is formed by a segment whose slope is arbitrarily close to vertical and by a segment whose slope is arbitrarily close to horizontal. Similarly, the bend of the edge $(u,w'')$ is placed below and to the left of the crossing point $c_2$ between $(w,w'')$ and $e$. In addition, the bend point of $(u,w'')$ is placed sufficiently below $e$ so that the almost horizontal segment of $(u,w'')$ intersect the almost vertical segment of $(v,w')$ and the vertical segment of $(w,w')$. A suitable choice of the bend points of $(v,w')$ and of $(u,w'')$ allow us to fix the smallest crossing angle to be arbitrarily close to $\frac{\pi}{2}$. Concerning the outer face, since all vertices are placed along an $xy$-monotone path, it is immediate to see that we can draw all its chords with one bend and such that each segment is either arbitrarily close to vertical or arbitrarily close to horizontal,  as shown in Fig.~\ref{fi:o2p-lac-2}.
\end{proof}

\begin{figure}[t]
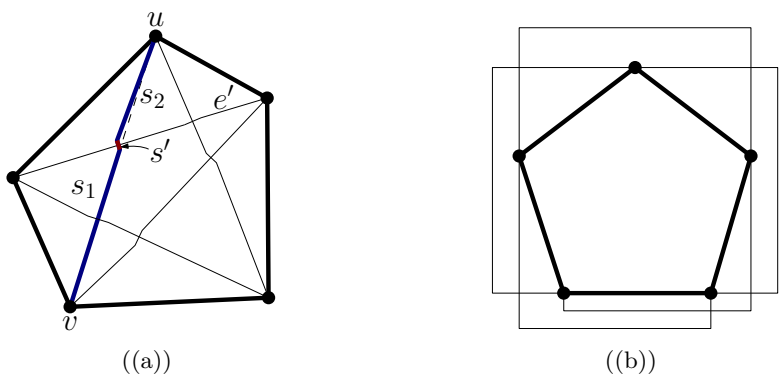

	\centering
	\begin{minipage}[b]{.3\textwidth}
		\centering 
		\includegraphics[page=5, width=\columnwidth]{figure/o2p}
		\subcaption{~}\label{fi:o2p-rac-1}
	\end{minipage}\hfil
	\begin{minipage}[b]{.3\textwidth}
		\centering 
		\includegraphics[page=6, width=\columnwidth]{figure/o2p}
		\subcaption{~}\label{fi:o2p-rac-2}
	\end{minipage}
	\caption{Drawing the chords of (a) an inner face and (b) the outer face, such that they use two bends each and they cross at right angles. }
\end{figure}

\begin{lemma}\label{le:RAC}
	Every optimal $2$-plane graph has a polyline drawing with curve complexity two that fully preserves its topology and such that every crossing angle is $\frac{\pi}{2}$.
\end{lemma}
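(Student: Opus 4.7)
The plan is to refine the one-bend construction of Lemma~\ref{le:LAC} by using the second available bend per chord to straighten each crossing to an exact right angle. The first step is identical to that lemma: compute, via the algorithm of Chiba et al.~\cite{Chiba1985}, a drawing $\Gamma$ of $\sigma(G)$ in which every inner pentagonal face is strictly convex and the outer face is a convex pentagon whose vertices lie along an $xy$-monotone curve.

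For each inner pentagonal face $f$, observe that the five chords inside $f$ cross in a pentagram pattern, i.e., each chord crosses exactly two others and the \emph{crossing graph} is a $5$-cycle. Since the $5$-cycle is not bipartite, one cannot globally assign horizontal/vertical directions to the chords and realise all crossings at right angles with straight segments. I would therefore label the chords $e_1,\dots,e_5$ along the $5$-cycle of crossings and handle four of them by an alternating assignment: draw $e_1$ and $e_3$ so that their middle segment is horizontal, and $e_2$ and $e_4$ so that their middle segment is vertical. Each such chord uses its two bends to connect the prescribed axis-aligned middle segment to its two endpoints on the boundary of $f$; because consecutive chords in the cycle now have perpendicular middle segments, the four crossings among $e_1,\dots,e_4$ are right angles. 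The fifth chord $e_5$ crosses both the horizontal $e_1$ and the vertical $e_4$. To make both of these crossings orthogonal, I would draw $e_5$ as a three-segment polyline (two bends) whose first segment is vertical near its crossing with $e_1$ and whose last segment is horizontal near its crossing with $e_4$; the intermediate segment then absorbs the direction change and has no crossings. Every chord thus uses at most two bends and every crossing inside $f$ has angle $\tfrac{\pi}{2}$.

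The outer face is handled analogously. Because its vertices lie on an $xy$-monotone convex arc (as in Lemma~\ref{le:LAC}), the same axis-aligned scheme applies, and in fact there is ample room to place axis-parallel middle segments: the five outer chords are routed with two bends each into axis-parallel middle parts, assigned horizontal/vertical orientations along the $5$-cycle in the same alternating way, with the one ``odd'' chord bent from vertical to horizontal.

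The main obstacle is the geometric feasibility: one must place the bends so that (i) the prescribed axis-aligned middle segment of each chord is long enough to host \emph{both} of its crossings, (ii) each chord stays inside $f$, and (iii) no spurious crossings are created. I would address this by a continuity/perturbation argument: starting from the straight-line drawing of the five chords inside the strictly convex pentagon $f$ (which realises the correct $5$-cycle of crossings in five distinct interior points), each chord is continuously deformed into the prescribed two-bend polyline by placing its two bend points in small open neighbourhoods of the two original straight-line crossings on it. For sufficiently small neighbourhoods, the combinatorial embedding (i.e., the set and order of crossings) is preserved, the chord remains in $f$, and the axis-aligned middle segment is arbitrarily long on the local scale while meeting its two perpendicular crossings at exact right angles. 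Performing this independently in every face of $\sigma(G)$ yields the required polyline drawing.
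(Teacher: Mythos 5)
There is a genuine gap in the feasibility step, and it is exactly the point your construction cannot gloss over. You prescribe that each of $e_1,\dots,e_4$ has an \emph{exactly} axis-aligned middle segment that must contain \emph{both} of its crossings, and you then justify realizability by a small-perturbation argument in which the two bend points of each chord are placed ``in small open neighbourhoods of the two original straight-line crossings on it.'' These two requirements are incompatible: if the bends of a chord lie within $\epsilon$ of its two original crossing points, which are at some fixed positive distance $d$ apart, then the middle segment joins two points that are $\epsilon$-close to those crossings, so its direction is within $O(\epsilon/d)$ of the direction of the original straight chord --- it cannot be made horizontal or vertical unless the straight chord already (almost) has that slope, which of course fails for a generic strictly convex pentagon. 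Conversely, if you insist on exact axis alignment, the bends and the crossings must move far from their straight-line positions, the ``sufficiently small neighbourhood'' argument no longer applies, and you are left with no proof that the five two-bend polylines stay inside $f$, realize exactly the five crossings of the pentagram, in the correct order along each chord and with the correct rotation at each crossing (all of which is needed for \emph{fully} preserving the topology). The same objection applies verbatim to your treatment of the outer face. Your observation that the crossing $5$-cycle is odd, forcing one exceptional chord, is correct, but it does not rescue the axis-aligned scheme.

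The paper's proof of Lemma~\ref{le:RAC} avoids any global direction assignment. It keeps the straight-line drawing of the chords inside each strictly convex face (outer face drawn as a regular $5$-gon via \cite{Chiba1985}) and only uses a combinatorial assignment of each of the five crossings to exactly one of its two chords (possible because the crossing graph is a $5$-cycle), so that every chord is responsible for exactly one crossing. That chord spends its two bends on a tiny local ``elbow'': a short subsegment around the assigned crossing is replaced by a short segment perpendicular to the \emph{other} chord (which is still straight there), and the tail segment not containing the chord's second crossing is re-sloped to reconnect. Right angles are thus achieved relative to the other chord's actual direction rather than to the coordinate axes, the modification is local and arbitrarily small, so no new crossings arise and the planarization embedding is untouched. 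If you want to keep your axis-aligned idea, you would have to prove from scratch that such a rectilinear-middle routing with only two bends per chord exists inside an arbitrary strictly convex pentagon and preserves the full topology; as written, the perturbation argument does not establish this.
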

\begin{proof}
	We use again the algorithm by Chiba et al.~\cite{Chiba1985}. This time, we use it to to compute a  drawing $\Gamma$ of $\sigma(G)$ such that the outer face is a regular $5$-gon. Let $f$ be an inner face of $\sigma(G)$, which is drawn strictly convex in $\Gamma$. Note that each of the five chords in $f$ is crossed twice, and hence we can assign each crossing to one of its two involved chords such that each crossing is assigned to one chord and each chord is assigned with exactly one crossing. We then draw the chords as straight-line segments and then locally modify the drawing of each chord in correspondence with the assigned crossing. Refer to Fig.~\ref{fi:o2p-rac-1}. Let $e=(u,v)$ and $e'$ be two chords that cross and let $e$ be the edge assigned with this crossing. Moreover let $s_1=\overline{up}$ and $s_2=\overline{qv}$ be the two segments obtained from $e$ by removing from it a short segment $s=\overline{pq}$ that contains the crossing point. In particular, let $s_2$ be the segment that does not contain the second crossing of the edge $e$. We replace the removed segment $s$ with another segment $s'$ having $p$ as an endpoint and such that it cross $e'$ forming a right angle. Then we slightly change the slope of $s_2$ (and its length) so that $q$ coincides with the other endpoint of $s'$. Note that the length of $s$ (and hence the length of $s'$) can be chosen sufficiently small such that no new crossings are introduced in the drawing. Also, since $s_2$ is not involved in any other crossing, we do not change the angle of any other crossings. With the same strategy all chords of $f$ can be modified to have two bends and to cross at right angles. Concerning the outer face, since it is drawn as a regular $5$-gon, it is immediate to see that we can draw all its chords with two bends and rectilinear, as in Fig.~\ref{fi:o2p-rac-2}.
\end{proof}

Lemmas~\ref{le:2bends-in-total}, \ref{le:LAC}, and \ref{le:RAC} are summarized by the next theorem.

\begin{theorem}\label{th:optimal-2-plane}
  Every optimal $2$-plane graph has a polyline drawing $\Gamma$ that fully preserves its topology and that has one of the following properties:
  
  \begin{inparaenum}[(a)]
  
    \item $\Gamma$ has two bends in total.
    
    \item $\Gamma$ has curve complexity one and every crossing angle is at least $\frac{\pi}{2}-\epsilon$, for~any~$\epsilon>0$.
    
    \item $\Gamma$ has curve complexity two and every crossing angle is exactly $\frac{\pi}{2}$.
    
  \end{inparaenum}
\end{theorem}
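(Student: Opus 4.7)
The plan is immediate: Theorem~\ref{th:optimal-2-plane} is a packaging theorem whose three items (a), (b), (c) correspond one-to-one to the three lemmas already proved in the section. So I would prove it by direct appeal, producing a separate drawing for each item rather than trying to construct one drawing that simultaneously realizes all three properties (the statement is disjunctive, so no joint construction is needed).

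Specifically, I would argue as follows. For item (a), I would invoke Lemma~\ref{le:2bends-in-total}, which guarantees a polyline drawing fully preserving the topology of the given optimal $2$-plane graph $G$ with only two bends in the whole drawing. For item (b), I would invoke Lemma~\ref{le:LAC}, which gives such a drawing with curve complexity one in which every crossing angle is at least $\frac{\pi}{2}-\epsilon$ for any prescribed $\epsilon>0$. For item (c), I would invoke Lemma~\ref{le:RAC}, which produces such a drawing with curve complexity two in which every crossing is orthogonal. In each case, topology preservation is already part of the cited lemma's conclusion, so nothing further has to be verified.

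There is no real obstacle here, since the content of the theorem is entirely contained in the lemmas; the only mild sanity check is to confirm that the lemmas together cover all three regimes announced in the statement (constant number of bends with no control on crossing angles; one bend per edge with near-right crossings; two bends per edge with exactly right crossings), which they do. Therefore the proof reduces to one sentence citing the three lemmas.
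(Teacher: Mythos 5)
Your proposal is correct and matches the paper exactly: the theorem is stated in the paper as a summary of Lemmas~\ref{le:2bends-in-total}, \ref{le:LAC}, and \ref{le:RAC}, with one lemma per item, just as you describe.
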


\section{Open Problems}\label{se:conclusions}

Theorem~\ref{th:disconnected-lower-bound} proves a lower bound of $\Omega(\sqrt{n})$ on the curve complexity of polyline drawings that partially preserve the topology and that do not have a connected skeleton. It may be worth understanding whether this bound is tight.

Theorem~\ref{th:optimal-2-plane} proves that for optimal $2$-plane graphs a crossing angle resolution arbitrarily close to $\frac{\pi}{2}$ can be achieved with curve complexity one, while optimal crossing angle of $\frac{\pi}{2}$ is achieved at the expenses of curve complexity two. Can optimal crossing angle resolution and  curve complexity one be simultaneously achieved?  A positive answer to this question is known if the planar skeleton of the graph is a dodecahedron~\cite{DBLP:conf/compgeom/Bekos0R17}.

Finally, a natural research direction suggested by the research in this paper is to extend the study of the curve complexity of drawings that fully preserve the topology to other families of beyond-planar topological graphs. For example, it would be interesting to understand whether Theorem~\ref{th:optimal-2-plane} can be extended to non-optimal 2-plane graphs.

\section*{Acknowledgements}
We wish to thank Stephen Wismath for useful discussions about the topics of this research.

%
%

\bibliography{topobends}

\begin{thebibliography}{10}

\bibitem{aafhpprsv-agdsc-15}
Bernardo~M. \'Abrego, Oswin Aichholzer, Silvia Fern\'andez-Merchant, Thomas
  Hackl, J\"urgen Pammer, Alexander Pilz, Pedro Ramos, Gelasio Salazar, and
  Birgit Vogtenhuber.
\newblock All good drawings of small complete graphs.
\newblock In {\em {EuroCG} 2015}, pages 57--60, 2015.

\bibitem{ahpsv-dmgdc-15}
Oswin Aichholzer, Thomas Hackl, Alexander Pilz, Gelasio Salazar, and Birgit
  Vogtenhuber.
\newblock {{Deciding monotonicity of good drawings of the complete graph}}.
\newblock In {\em {EGC} 2015}, pages 33--36, 2015.

\bibitem{barnette94}
David~W. Barnette.
\newblock 2-connected spanning subgraphs of planar 3-connected graphs.
\newblock {\em J. Combin. Theory Ser. B}, 61(2):210 -- 216, 1994.

\bibitem{DBLP:journals/jgaa/BekosKM18}
Michael~A. Bekos, Michael Kaufmann, and Fabrizio Montecchiani.
\newblock Guest editors' foreword and overview.
\newblock {\em J. Graph Algorithms Appl.}, 22(1):1--10, 2018.

\bibitem{DBLP:conf/compgeom/Bekos0R17}
Michael~A. Bekos, Michael Kaufmann, and Chrysanthi~N. Raftopoulou.
\newblock On optimal 2- and 3-planar graphs.
\newblock In {\em {SOCG} 2017}, volume~77 of {\em LIPIcs}, pages 16:1--16:16.
  LZI, 2017.

\bibitem{DBLP:conf/ewcg/Chaplick18}
Steven Chaplick, Fabian Lipp, Alexander Wolff, and Johannes Zink.
\newblock 1-bend {RAC} drawings of {NIC}-planar graphs in quadratic area.
\newblock In {\em GD 2018}. Springer, To appear.

\bibitem{Chiba1985}
Norishige Chiba, Kazunori Onoguchi, and Takao Nishizeki.
\newblock Drawing plane graphs nicely.
\newblock {\em Acta Inform.}, 22(2):187--201, 1985.

\bibitem{DBLP:journals/corr/abs-1804-07257}
Walter Didimo, Giuseppe Liotta, and Fabrizio Montecchiani.
\newblock A survey on graph drawing beyond planarity.
\newblock {\em CoRR}, abs/1804.07257, 2018.

\bibitem{durocher2016}
Stephane Durocher and Debajyoti Mondal.
\newblock Relating graph thickness to planar layers and bend complexity.
\newblock In {\em {ICALP} 2016}, volume~55 of {\em LIPIcs}, pages 10:1--10:13.
  LZI, 2016.

\bibitem{eades2015}
Peter Eades, Seok-Hee Hong, Giuseppe Liotta, Naoki Katoh, and Sheung-Hung Poon.
\newblock Straight-line drawability of a planar graph plus an edge.
\newblock In {\em {WADS} 2015}, pages 301--313. Springer, 2015.

\bibitem{DBLP:journals/corr/EppsteinGSU16}
David Eppstein, Mereke van Garderen, Bettina Speckmann, and Torsten Ueckerdt.
\newblock Convex-arc drawings of pseudolines.
\newblock {\em CoRR}, abs/1601.06865, 2016.

\bibitem{fary48}
Istv\'an F\'ary.
\newblock On straight line representations of planar graphs.
\newblock {\em Acta Univ. Szeged. Sect. Sci. Math.}, 11:229--233, 1948.

\bibitem{kln-nstl-91}
Jan Kratochv\'{\i}l, Anna Lubiw, and Jaroslav Ne\v{s}et\v{r}il.
\newblock Noncrossing subgraphs in topological layouts.
\newblock {\em SIAM J. Discrete Math.}, 4(2):223--244, 1991.

\bibitem{Kyncl2011}
Jan Kyn{\v{c}}l.
\newblock Simple realizability of complete abstract topological graphs in {P}.
\newblock {\em Discrete Comput. Geom.}, 45(3):383--399, 2011.

\bibitem{KYNCL20091676}
Jan Kyn\v{c}l.
\newblock Enumeration of simple complete topological graphs.
\newblock {\em European Journal of Combinatorics}, 30(7):1676 -- 1685, 2009.

\bibitem{Stein51}
Sherman~K. Stein.
\newblock Convex maps.
\newblock {\em Proc. Am. Math. Soc.}, 2(3):464--466, 1951.

\bibitem{Wagner36}
Klaus Wagner.
\newblock Bemerkungen zum {V}ierfarbenproblem.
\newblock {\em Jahresber. Dtsch. Math. Ver.}, 46:26--32, 1936.

\end{thebibliography}

\end{document}